\def\intprod{\vrule height 0pt depth 0.4pt width 3pt \vrule height 7pt depth 0.4pt
\kern 3pt}
\def\prodint{\kern 2pt \vrule height 7pt depth 0.4pt  \vrule %
height 0pt depth 0.4pt width 3pt %
\kern 2pt}
\newcommand{\bfi}{\bfseries\itshape}
\newtheorem{theorem}{Theorem}[section]
\newtheorem{corollary}[theorem]{Corollary}
\newtheorem{proposition}[theorem]{Proposition}
\newtheorem{lemma}[theorem]{Lemma}
\newtheorem{definition}[theorem]{Definition}
\begin{document}


\title[The Hamilton-Pontryagin Principle and Multi-Dirac Structures]{%
The Hamilton-Pontryagin Principle and Multi-Dirac Structures \\ 
for Classical Field Theories\footnote{We dedicate this paper to the memory of our long-time mentor, collaborator and friend Jerry Marsden.}}

\author{J. Vankerschaver}
\altaffiliation[Also at ]{Department of Mathematics, Ghent University, Krijgslaan 281, B-9000 Ghent, Belgium.}
\email{jvankers@math.ucsd.edu.}
\affiliation{Department of Mathematics, University of California at San Diego, 9500 Gilman Drive, San Diego CA 92093, USA.}

\author{H. Yoshimura}%
 \email{yoshimura@waseda.jp.}
\affiliation{Applied Mechanics and Aerospace Engineering, Waseda University, Okubo, Shinjuku, Tokyo, 169-8555, Japan.}

\author{M. Leok}
\email{mleok@math.ucsd.edu.}
\affiliation{Department of Mathematics, University of California at San Diego, 9500 Gilman Drive, San Diego CA 92093, USA.}

\date{\today}


\begin{abstract}
We introduce a variational principle for field theories, referred to as the Hamilton-Pontryagin principle, and we show that the resulting field equations are the Euler-Lagrange equations in implicit form.  Secondly, we introduce multi-Dirac structures as a graded analog of standard Dirac structures, and we show that the graph of a multisymplectic form determines a multi-Dirac structure.  We then discuss the role of multi-Dirac structures in field theory by showing that the implicit Euler-Lagrange equations for fields obtained from the Hamilton-Pontryagin principle can be described intrinsically using multi-Dirac structures.   Lastly, we show a number of illustrative examples, including time-dependent mechanics, nonlinear scalar fields, Maxwell's equations, and elastostatics.
\end{abstract}

\pacs{03.50.-z, 02.30.Xx, 45.20.Jj}
\keywords{Variational principles, classical field theories, multisymplectic formulation, Hamilton-Pontryagin principle, multi-Dirac structures}

\maketitle


\section{Introduction}
 
In this paper,  we describe classical field theories by means of a new variational principle, referred to as the {\it Hamilton-Pontryagin variational principle} and we show that the resulting field equations can be described in terms of {\it multi-Dirac structures}, a field theoretic extension of the concept of Dirac structures introduced by Courant in Ref.~\onlinecite{Co1990}.   To set the stage for the paper, we begin with reviewing the Hamilton-Pontryagin principle and Dirac structures in the context of mechanical systems.

\subsection{The Hamilton-Pontryagin Variational Principle in Mechanics}

Let $L(q^{i}, v^{i})$ be a Lagrangian.   The {\it Hamilton-Pontryagin principle} is a variational principle in which the position coordinates $q^{i}$ and the velocity coordinates $v^{i}$ are treated independently and the relation $\dot{q}^{i} = v^{i}$ is imposed as a constraint by means of a Lagrange multiplier $p_{i}$.  This leads to an action functional of the form 
\begin{equation} \label{HPmech}
\begin{split}
	S(q, v, p) &= \int_{t_0}^{t_1} 
	\big(L(q^{i}(t), v^{i}(t)) + \left< p_{i}(t), \dot{q}^{i}(t) - v^{i}(t) \right>\big) dt\\
	&{=\int_{t_0}^{t_1} \big( \left< p_{i}(t), \dot{q}^{i}(t) \right> -E(q^{i}(t), v^{i}(t), p_{i}(t)) \big)dt,}
\end{split}  
\end{equation}
{where $E(q^{i},v^{i},p_{i}):= \left< p_{i}, v^{i} \right>-L(q^{i},v^{i})$ is the generalized energy associated to the Lagrangian.
By taking arbitrary variations with respect to $q^{i}$, $v^{i}$ and $p_{i}$, we obtain the Euler-Lagrange equations in implicit form: 
\begin{equation} \label{hpeqns}
	p_i = \frac{\partial L}{\partial v^i}, \quad 
	\dot{p}_i = \frac{\partial L}{\partial q^i}, 
	\quad \text{and} \quad \dot{q}^i = v^i.
\end{equation}

Variational principles in which the position and velocity coordinates are varied independently have a long history: the action functional \eqref{HPmech} in particular first appears in the work of Livens in Ref.~\onlinecite{Li1919} and was rediscovered in Refs.~\onlinecite{ScDeMiTs1998, GiTy1990}, among others.
The name ``Hamilton-Pontryagin principle'' was coined by Yoshimura and Marsden in Ref.~\onlinecite{YoMa2006b} because of the similarity with the Pontryagin maximum principle in optimal control theory.

It was shown in Ref.~\onlinecite{YoMa2006b} that the Hamilton-Pontryagin principle and the associated implicit equations of motion can be described in an intrinsic way by means of Dirac structures.  A  discrete version of the Hamilton-Pontryagin principle was developed in Refs.~\onlinecite{BoMa2009, LeOh2011} with a view towards the design of accurate variational integrators.

\subsection{Dirac Structures and Lagrange-Dirac Systems in Mechanics}

The notion of Dirac structures was originally developed by Courant and Weinstein in Refs.~\onlinecite{CoWe1988, Co1990} and Dorfman in Ref.~\onlinecite{Do1993} as a unification of (pre-)symplectic and Poisson structures.  It was soon realized that Dirac structures play an important role in mechanics.  In particular, it was shown that interconnected systems, such as LC circuits, and nonholonomic systems can be effectively formulated in the context of implicit Hamiltonian systems (Refs.~\onlinecite{Co1990, Do1993, CoWe1988, ScMa1995}).  On the Lagrangian side, it was shown in Ref.~\onlinecite{YoMa2006a} that Dirac structures induced from distributions on configuration manifolds naturally yield a notion of {\it implicit Lagrangian systems}, allowing for the description of mechanical systems with degenerate Lagrangians and nontrivial constraint distributions.  

{I}n order to show the link with the equations \eqref{hpeqns} obtained from the Hamilton-Pontryagin principle, let {$\Delta_{Q}$} be a distribution on a manifold $Q$.  Consider the canonical symplectic form $\Omega$ on $T^\ast Q$ and define $\Omega_{M}$ as the pullback of $\Omega$ to the {\it Pontryagin bundle} $M := TQ \oplus T^\ast Q$.  The set $D$, given by 
\[
	D = \{ (X, \alpha) \in TM \oplus T^\ast M \mid  \mathbf{i}_X {\Omega_{M}}  - \alpha \in \Delta_M^\circ, \quad X \in \Delta_M \},
\]
where $\Delta_M := T\pi_{Q, M}^{-1}(\Delta_{Q})$ (with {$\pi_{Q, M}: M \rightarrow Q$ is the Pontryagin bundle projection}), is a Dirac structure (see {Refs.~\onlinecite{YoMa2006a, YoMa2006b}}){; namely,} $D$ is maximally isotropic with respect to the following pairing on $TM \oplus T^\ast M$: 
\begin{equation} \label{mechpairing}
	\left<\left< (X, \alpha),(Y, \beta)\right>\right>_{-}
	= \frac{1}{2} ( \mathbf{i}_X \beta + \mathbf{i}_Y \alpha).
\end{equation}

The {\bfi implicit Euler-Lagrange equations} are given by $(X, \mathbf{d} E) \in D$.  It can be shown that these equations are given in coordinates by 
\begin{equation} \label{implhol}
	p_i  = \frac{\partial L}{\partial v^i}, 
	\quad 
	\dot{p}_i = \frac{\partial L}{\partial q^i} + \lambda_\alpha A^\alpha_i(q), 
	\quad
	\dot{q}^i = v^i, \quad \text{and} \quad
	A^\alpha_i(q) v^i = 0;
\end{equation}
see Ref.~\onlinecite{CeMaRaYo2010}.  Here, the forms $A^\alpha_i(q) d q^i$ form a basis for the annihilator $\Delta^\circ$, and the $\lambda_\alpha$ are Lagrange multipliers.  The equations \eqref{implhol} are readily seen to be the nonholonomic Euler-Lagrange equations in implicit form, and when the constraints vanish, $A^\alpha_i = 0$, we recover the implicit Euler--Lagrange equations \eqref{hpeqns}.

\subsection{The Hamilton-Pontryagin Principle for Field Theories}

For classical field theories, the Lagrangian $L(x^\mu, y^A, v^A_\mu)$ depends not only on the spacetime coordinates $x^\mu$ and the fields $y^A$, but also on the spacetime velocities $v^A_\mu$, which represent the derivatives of the fields $y^A$ with respect to $x^\mu$.  By comparison with \eqref{HPmech}, a candidate Hamilton-Pontryagin principle may then be given in coordinates by 
\begin{equation} \label{intro:fieldHP}
	S(y^A, y^A_\mu, p_A^\mu) = \int_U \left[
		p_A^\mu \left( \frac{\partial y^A}{\partial x^\mu} - v^A_\mu \right) + L(x^\mu, y^A, v^A_\mu) \right] d^{n+1} x.
\end{equation}
By taking variations with respect to $y^A$, $v^A_\mu$ and $p_A^\mu$ (where $\delta y^A$ is required to vanish on the boundary $\partial U$ of the integration domain), we obtain the following implicit equations: 
\begin{equation} \label{intro:impFE}
	\frac{\partial p_A^\mu}{\partial x^\mu} = \frac{\partial L}{\partial y^A}, 
	\quad
	p_A^\mu = \frac{\partial L}{\partial v^A_\mu}, 
	\quad \text{and} \quad 
	\frac{\partial y^A}{\partial x^\mu} = v^A_\mu,
\end{equation}
which clearly generalizes \eqref{hpeqns} to the case of field theories. In section~\ref{sec:hpprinciple}, we provide an intrinsic version of this variational principle, using the geometry of the first jet bundle and its dual as our starting point.  In contrast to mechanics, the jet bundle is an affine bundle and this makes the definition of the duality pairing used implicitly in \eqref{intro:fieldHP} somewhat more complicated.

\subsection{Multi-Dirac Structures}  

In section~\ref{sec:multidirac}, we describe the implicit field equations \eqref{intro:impFE} in terms of multi-Dirac structures.  These geometric structures were introduced in Ref.~\onlinecite{VaYoLe2011} as a graded version of the standard concept of Dirac structures, and their relevance for field theory lies in the fact that they generalize the concept of the ``graph of a multisymplectic structure.''
 
To make this more precise, we recall that a multisymplectic structure is a form $\Omega$ of degree at least two, which is closed and non-degenerate.  When the degree of $\Omega$ is two, $\Omega$ is a symplectic form, while if $\Omega$ is a top form, then it is necessarily a volume form.  The literature on multisymplectic field theories is by now very extensive, but for fundamental aspects we refer to Refs.~\onlinecite{GoSt1973,KiTu1979, BiSnFi1988,CaCrIb1991, GoIsMa1997, Br1997} and the references therein. 
 
Just as the graph of a symplectic form is a Dirac structure, the graph of a multisymplectic form $\Omega$ on a manifold $M$ turns out to be an example of what we have called a multi-Dirac structure in Ref.~\onlinecite{VaYoLe2011}.  Here, the graph of $\Omega$ is defined as follows.  Let $k+1$ be the degree of $\Omega$ and consider the mapping whereby an $l$-multivector field $\mathcal{X}_l$ (with $1 \le l \le k$) is contracted with $\Omega$: 
\[
	\mathcal{X}_l \in \mbox{$\bigwedge$}^l(TM) \mapsto \mathbf{i}_{\mathcal{X}_l} \Omega \in \mbox{$\bigwedge$}^{k+1 - l}(T^\ast M).
\]
The graph of this mapping determines a submanifold $D_l$ of $\bigwedge^l(TM) \times_M \bigwedge^{k+1-l}(T^\ast M)$, and by considering the direct sum 
\begin{equation} \label{outline_DS}
	D = D_1 \oplus \cdots \oplus D_{k} 
\end{equation}
of all such subbundles, we obtain a \emph{multi-Dirac structure}.  This structure is maximally isotropic under a graded antisymmetric version of the pairing \eqref{mechpairing}, and multi-Dirac structures can be considered more generally as \emph{graded versions of standard Dirac structures}. 

The link between multi-Dirac structures and the implicit field equations is provided by theorem~\ref{thm:implicitsympl}, wherein we show that the field equations obtained from the Hamilton-Pontryagin principle can be written as $\mathbf{i}_{\mathcal{X}} \Omega_M = (-1)^{n+2} \mathbf{d} E$.   In multi-Dirac form this becomes 
\begin{equation} \label{impliciteqns}
	(\mathcal{X}, (-1)^{n+2} \mathbf{d} E) \in D_{n+1}, 
\end{equation}
where $D_{n+1}$ is the component of the highest degree in \eqref{outline_DS}.  More generally, we arrive in definition~\ref{def:lagrangedirac} at the concept of a \emph{Lagrange-Dirac field theory}, which is a triple $(\mathcal{X}, E, D_{n+1})$ satisfying \eqref{impliciteqns}, where $D_{n + 1}$ belongs to a multi-Dirac structure $D$ which does not necessarily come from a multisymplectic form.

\subsection{Outline of the Paper}  

After introducing the Hamilton-Pontryagin principle for field theories in section~\ref{sec:hpprinciple}, we discuss the link with multi-Dirac structures in section~\ref{sec:multidirac}.  We finish the paper in section~\ref{sec:examples} with a number of examples of implicit field theories and multi-Dirac structures, taken from mechanics, electromagnetism and elasticity.  Among others, we show that the Hu-Washizu principle from linear elastostatics is a particular example of the Hamilton-Pontryagin principle.

The focus throughout this paper is on the Hamilton-Pontryagin principle and the application of multi-Dirac structures to field theories.  In the companion paper Ref.~\onlinecite{VaYoLe2011}, we  develop the algebraic and geometric properties of multi-Dirac structures.


\section{The Geometry of Jet Bundles} \label{sec:geometry}

In this section, we provide a quick overview of the geometry of jet bundles for the treatment of Lagrangian field theories.  Most of the material in this section is standard, and can be found in Refs.~\onlinecite{Sa1989, GoIsMa1997} and the references therein.

\subsection{Jet Bundles}

Let $X$ be an oriented manifold with volume form $\eta$, which in many examples is spacetime, and let $\pi_{XY}: Y \to X$ be a finite-dimensional fiber bundle which we call the {\bfi covariant configuration bundle}. The physical fields are sections of this bundle, which is the covariant analogue of the configuration space in classical mechanics. For future reference, we assume that the dimension of $X$ is $n + 1$ and that $\pi_{XY}$ is a bundle of rank $N$, so that $\dim Y = n + N + 1$.  Coordinates on $X$ are denoted $x^{\mu},\, \mu=1,2,...,n+1$, and fiber coordinates $Y$ are denoted by $y^{A}, \, A=1,...,N$ so that a section $\phi: X \rightarrow Y$ of $\pi_{XY}$ has coordinate representation $\phi(x)=(x^{\mu},y^{A}(x))$.   We will also assume that $X$ is equipped with a fixed volume form $\eta$ given in adapted local coordinates by $\eta = d^{n + 1} x := dx^1 \wedge \cdots dx^{n+1}$. 

The analogue in classical field theory of the tangent bundle in mechanics is the {\bfi first jet bundle} $J^1 Y$, which consists of equivalence classes of local sections of $\pi_{XY}$, where we say that two local sections $\phi_{1}$, $\phi_{2}$ of $Y$ are equivalent at $x \in X$ if their Taylor expansions around $x$ agree to the first order.  In other words, $\phi_1$ and $\phi_2$ are equivalent if $\phi_1(x) = \phi_2(x)$ and $T_x \phi_1 = T_x \phi_2$.  It follows that an equivalence class $[\phi]$ of local sections can be identified with a linear map $\gamma : T_x X \rightarrow T_y Y$ such that $T \pi_{XY} \circ \gamma = \mathrm{Id}_{T_x X}$.  As a consequence, $J^1 Y$ is a fiber bundle over $Y$, where the projection ${\pi_{Y,J^1 Y}}: J^1 Y \rightarrow Y$ is defined as follows: let $\gamma: T_x X \rightarrow T_y Y$ be an element of $J^1 Y$, then ${\pi_{Y,J^1 Y}}(\gamma) := y$.  Coordinates on $J^{1}Y$ are denoted by $(x^{\mu},y^{A},v^{A}_{\mu})$, where the fiber coordinates $v^{A}_{\mu}$ represent the first-order derivatives of a section.  They are defined by noting that any $\gamma \in J^1 Y$ is locally of the form
\begin{equation} \label{jetmap}
	\gamma =d x^\mu \otimes \left( \frac{\partial}{\partial x^\mu} + v^A_\mu \frac{\partial}{\partial y^A} \right).
\end{equation}
The first jet bundle $J^1 Y$ is an \emph{affine} bundle over $Y$, with underlying vector bundle $L(TX, VY)$ of linear maps from the tangent space $TX$ into the vertical bundle $VY$, defined as  
\[
V_{y}Y=\{ v\in T_{y}Y \mid T\pi_{XY}(v)=0 \}, \quad \text{for $y \in Y$}.
\]

Given any section $\phi: X \to Y$ of $\pi_{XY}$, its tangent map $T_{x}\phi$ at $x \in X$ is an element of $J^{1}_yY$, where $y = {\phi(x)}$.   Thus, the map $x \mapsto T_{x}\phi$ defines a section of  $J^{1}Y$, where now $J^1 Y$ is regarded as a bundle over $X$. This section is denoted $j^{1}\phi$ and is called the {\bfi first jet prolongation} of $\phi$. In coordinates, $j^{1}\phi$ is given by 
\[
x^{\mu} \mapsto (x^{\mu},y^{A}(x^{\mu}),\partial_{\nu}y^{A}(x^{\mu})),
\]
where $\partial_{\nu}=\partial/\partial{x^{\nu}}$. A section of the bundle $J^{1}Y \to X$ which is the first jet prolongation of a section $\phi: X \to Y$ is said to be {\bfi holonomic}.

\subsection{Dual Jet Bundles} 

Next, we consider the field-theoretic analogue of the cotangent bundle. We define the {\bfi dual jet bundle} $J^{1}Y^{\star}$ to be the vector bundle over $Y$ whose fiber at $y \in Y_{x}$ is the set of affine maps from $J^{1}_{y}Y$ to $\Lambda^{n+1}_{x}X$, where $\Lambda^{n+1} X$ denotes the bundle of $(n+1)$-forms on $X$. Note that since the space of affine maps from an affine space into a vector space forms a vector space, $J^{1}Y^{\star}$ is a vector bundle despite the fact that $J^{1}Y$ is only an affine bundle. A smooth section of $J^{1}Y^{\star}$ is an affine bundle map of $J^{1}Y$ to $\Lambda^{n+1}X$.  Any affine map from $J^{1}_{y}Y$ to $\Lambda^{n+1}_{x}X$ can locally be written as 
\[
v^{A}_{\mu} \mapsto (p+p_{A}^{\mu}v^{A}_{\mu})\,d^{n+1}x,
\]
where $d^{n+1}x := dx^{1}\wedge dx^{2} \wedge \cdots \wedge dx^{n+1}$ is a coordinate representation of the volume form $\eta$, so that coordinates on $J^{1}Y^{\star}$ are given by $(x^\mu, y^A,p_{A}^{\mu}, p)$.

Throughout this paper we will employ another useful description of $J^{1}Y^{\star}$. Consider again the bundle $\Lambda^{n+1}Y$ of $(n+1)$-forms on $Y$ and let $Z \subset \Lambda^{n+1}Y$ be the subbundle whose fiber over $y \in Y$ is given by
\[
Z_{y}=\{z \in \Lambda_{y}^{n+1} Y \mid \mathbf{i}_{v}\mathbf{i}_{w}z=0 \;\text{for all}\; v,w \in V_{y}Y \},
\]
where $\mathbf{i}_{v}$ denotes left interior multiplication by $v$.  In other words, the elements of $Z$ vanish when contracted with two or more vertical vectors.  The bundle $Z$ is canonically isomorphic to $J^1 Y^\star$ as a vector bundle over $Y$; this can be easily understood from the fact that the elements of $Z$ can locally be written as 
\[
z=p\,d^{n+1}x +p_{A}^{\mu}dy^{A} \wedge d^{n}x_{\mu},
\]
where $d^{n}x_{\mu} :=\partial_{\mu}$ {\Large$\lrcorner$} $d^{n+1}x$.

From now on, we will silently identify $J^1 Y^\star$ with $Z$.  The duality pairing between elements $\gamma \in J^1_{y_x} Y$ and $z \in Z_{y_x}$ can then be written as 
\[
	\left< z, \gamma \right> = \gamma^\ast z \in \Lambda_x^{n+1} X.
\]
In coordinates, $\left< z, \gamma \right> = (p_A^\mu v^A_\mu + p ) d^{n+1} x$.

\subsection{Canonical Multisymplectic Forms}

Analogous to the canonical symplectic forms on a cotangent bundle, there are canonical forms on $J^{1}Y^{\star}$.  
Let us first define the {\bfi canonical $(n+1)$-form} $\Theta$ on $J^1 Y^\star \cong Z$ by
\begin{equation} \label{canonnform}
\begin{split}
\Theta(z)(u_{1},...,u_{n+1})&=z(T\pi_{Y, J^1Y^\star}( u_{1}),...,T\pi_{Y, J^1Y^\star} (u_{n+1}))\\
&=(\pi_{Y, J^1Y^\star}^{\ast}z)(u_{1},...,u_{n+1}),
\end{split}
\end{equation}
where we have interpreted $z \in Z \cong J^1 Y^\star$ as before as an $(n+1)$-form on $Y$, and $u_{1},...,u_{n+1} \in T_{z} Z$. The {\bfi canonical multisymplectic  $(n+2)$-form} $\Omega$ on $J^1Y^\star$ is now defined as
\[
\Omega = -\mathbf{d}\Theta.
\]

In coordinates, one has the following expression for $\Theta$:
\[
\Theta=p_{A}^{\mu}dy^{A} \wedge d^{n}x_{\mu}+p\,d^{n+1}x,
\]
while $\Omega$ is locally given by 
\begin{equation} \label{msform}
\Omega=dy^{A} \wedge dp_{A}^{\mu}\wedge d^{n}x_{\mu}-dp \wedge d^{n+1}x.
\end{equation}
It is easy to show (see Ref.~\onlinecite{CaIbLe1999}) that $\Omega$ is non-degenerate in the sense that, for all $v \in TZ$, $\mathbf{i}_v \Omega=0$ implies that $v = 0$.  Moreover, $\Omega$ is trivially closed: $\mathbf{d} \Omega = 0$.  Forms which are both closed and non-degenerate are referred to as {\bfi multisymplectic} forms.  If the degree of $\Omega$ is two, $\Omega$ is just a symplectic form, while if $\Omega$ is of maximal degree, $\Omega$ is a volume form.  More examples of multisymplectic manifolds can be found in Ref.~\onlinecite{CaIbLe1999}.

A form which is non-degenerate but not necessarily closed is referred to as an {\bfi almost multisymplectic} form.  On the other hand, in the context of field theory, we will often refer to a closed but possibly degenerate form as a {\bfi pre-multisymplectic} form.

\subsection{Lagrangian Densities and the Covariant Legendre Transformation} 

A {\bfi Lagrangian density} is a smooth map $\mathcal{L}=L\eta:J^{1}Y \to \Lambda^{n+1}X$.  In local coordinates, we may write 
\[
\mathcal{L(\gamma)}=L(x^{\mu},y^{A},v^{A}_{\mu})d^{n+1}x,
\]
where $L$ is a function on $J^1 Y$ to which we also refer as the Lagrangian.

The corresponding {\bfi covariant Legendre transformation} for a given Lagrangian density $\mathcal{L}:J^{1}Y \to \Lambda^{n+1}X$ is a fiber preserving map $\mathbb{F}\mathcal{L}:J^{1}Y \to J^{1}Y^{\star}$ over $Y$, which is given by the first order vertical Taylor approximation to $\mathcal{L}$:
\[
\left<\mathbb{F}\mathcal{L}(\gamma),\gamma^{\prime}\right>=\mathcal{L}(\gamma) + \frac{d}{d \epsilon}\bigg|_{\epsilon=0} \mathcal{L}(\gamma+ \epsilon (\gamma-\gamma^{\prime})),
\]
where $\gamma, \gamma^{\prime} \in J^{1}Y$. In coordinates, $\mathbb{F}\mathcal{L}(x^\mu, y^A, v^A_\mu) = (x^\mu, y^A, p_A^\mu, p)$, where
\begin{equation}\label{CovLegTrans}
p_{A}^{\mu}=\frac{\partial{L}}{\partial{v^{A}_{\mu}}}, \qquad p=L-\frac{\partial{L}}{\partial{v^{A}_{\mu}}}v^{A}_{\mu}.
\end{equation}


\section{The Hamilton-Pontryagin Principle for Field Theories} \label{sec:hpprinciple}

In this section, we introduce the Hamilton-Pontryagin action principle for classical field theories.  Let us first define the Pontryagin bundle 
\[
M :=J^{1}Y \times_Y Z,
\]
as the fibered product over $Y$ of the jet bundle $J^1 Y$ and its dual $Z$ (recall that we identified $J^1 Y^\star$ with $Z$).  This bundle plays a similar role as the Pontryagin bundle $TQ \oplus T^{\ast}Q$ over a configuration manifold $Q$ for the case of classical mechanics.  Note that $M$ is a bundle over $Y$ and hence also over $X$.

\subsection{The Generalized Energy Density} 

Let $\mathcal{L}: J^{1}Y \to \Lambda^{n+1}X$ be a Lagrangian density which is possibly degenerate. We define the {\bfi generalized energy density} associated to $\mathcal{L}$ by the map  $\mathcal{E}:M \to \Lambda^{n+1}X$ defined as 
\begin{equation} \label{generalizedenergydensity}
\mathcal{E}(\gamma,z):=E(\gamma,z)\eta=\left< z, \gamma \right>-\mathcal{L}(\gamma)
\end{equation}
for $(\gamma,z) \in M$.  In local coordinates $(x^{\mu},y^{A},v^{A}_{\mu},p, p^{\mu}_{A})$ on $M$, the generalized energy density is represented by  
\begin{equation} \label{generalizedenergy}
\begin{split}
\mathcal{E}&=E(x^{\mu},y^{A},v^{A}_{\mu},p,p^{\mu}_{A})d^{n+1}x\\
&=\left(p+p_{A}^{\mu}v^{A}_{\mu}-L(x^{\mu},y^{A},v^{A}_{\mu})\right)d^{n+1}x,
\end{split}
\end{equation}
where $E=p+p_{A}^{\mu}v^{A}_{\mu}-L(x^{\mu},y^{A},v^{A}_{\mu})$ is called the {\bfi generalized energy} on $M$.

\subsection{Pre-Multisymplectic Forms on $M$} 

Recall the definition of the canonical forms $\Theta$ and $\Omega:=-\mathbf{d}\Theta$ on the dual jet bundle $Z \cong J^1 Y^\star$, and let $\pi_{Z M}: M \rightarrow Z$ be the projection onto the second factor.  The forms $\Theta$ and $\Omega$ can be pulled back along $\pi_{ZM}$ to yield corresponding forms on $M$:
\[
	\Theta_M := \pi_{ZM}^\ast \Theta \quad \text{and} \quad 
	\Omega_M := \pi_{ZM}^\ast \Omega.
\]
Note that $\Omega_M$ cannot be multisymplectic since it has a non-trivial kernel: for all {$v \in T J^1 Y$} we have that $\mathbf{i}_v \Omega_M = 0$.  As a result, we will refer to $\Omega_M$ as the {\bfi pre-multisymplectic $(n + 2)$-form} on $M$.  

Finally, for any Lagrangian density $\mathcal{L}$ with associated energy density $\mathcal{E}$, we introduce another pre-multisymplectic $(n + 2)$-form on $M$ by
\[
\Omega_{\mathcal{E}}=\Omega_{M}+\mathbf{d}\mathcal{E}.
\]

\subsection{The Hamilton-Pontryagin Principle} \label{sec:hp_principle}

Using the pre-multisymplectic forms $\Omega_M$ and $\Omega_{\mathcal{E}}$, we establish the {\bfi Hamilton-Pontryagin variational principle} for classical field theories.  This principle is similar to the expression \eqref{HPmech} for mechanical systems outlined in the introduction.

\begin{definition}\label{def:HPActionFunct}
Consider a Lagrangian density $\mathcal{L}$ with associated generalized energy density $\mathcal{E}$.  The Hamilton-Pontryagin action functional is defined as 
\begin{equation}\label{HPActionFunc}
	S(\psi) = \int_{X}  \psi^{\ast}(\Theta_{M} -\mathcal{E}),
\end{equation}
where $\psi$ is a section of $\pi_{XM}: M \to X$.  In other words, $\psi$ can be written as $\psi=(\gamma, z)$ where $\gamma$ is a section of $\pi_{X,J^{1}Y}:J^{1}Y \to X$ and $z$ is a section of $\pi_{XZ}: Z \to X$.
\end{definition}

We have defined the Hamilton-Pontryagin principle in terms of the  {$(n+1)$-} form $\Theta_M$.  By using the definition \eqref{canonnform} and \eqref{generalizedenergydensity}, we can rewrite the Hamilton-Pontryagin action functional as 
\[
	S(\psi) = \int_X \left( \left<z, j^1 \phi \right> 
		- \left<z, \gamma \right> + \mathcal{L}(\gamma) 
			\right),
\]
where $\phi$ is a section of $\pi_{XY}$ as before. An expression for $S$ in local coordinates will be given below in Section~\ref{sec:coords}.

A {\bfi vertical variation} of a section $\psi$ of $\pi_{XM}: M \to X$ is a one-parameter family $\psi_\lambda:X \to M$ of sections, such that $\psi_0=\psi$, and $\psi_\lambda$ agrees with $\psi$ outside a compact set $U$ of $X$.   Such a variation can be constructed by considering a one-parameter family of diffeomorphisms $\nu_\lambda : M \rightarrow M$ such that $\nu_0$ is the identity, $\nu_\lambda$ is the identity outside of $\pi_{XM}^{-1}(U)$ and $\nu_\lambda$ preserves the fibration $\pi_{XM} : M \rightarrow X$, or in other words $\pi_{XM} \circ \nu_\lambda = \pi_{XM}$.  Then, the composition $\psi_\lambda=\nu_\lambda \circ \psi$ is a vertical variation of $\psi$.

At the infinitesimal level, an infinitesimal variation of $\psi$ is a $\pi_{XM}$-vertical vector field $\mathcal{V}_{M}: \mathrm{Im}\, \psi \to VM$, defined on the image of $\psi$ in $M$.  Here $VM$ is the vertical subbundle of $TM$, defined by
\[
VM=\left\{ w \in TM \mid T\pi_{XM} (w) =0\right\}.
\]
We denote by $\mathfrak{X}^{\mathrm{V}}(M)$ the module of the $\pi_{XM}$-vertical vector fields on $M$.

We say that $\psi$ is a {\bfi critical point} of the action \eqref{HPActionFunc} if  $\delta S(\psi) = 0$ for all variations of $\psi$, where 
\[
\begin{split}
\delta S(\psi)&=\frac{d}{d\lambda}\bigg|_{\lambda=0}S(\psi_{\lambda})\\
&=\frac{d}{d\lambda}\bigg|_{\lambda=0} \int_{U} \psi_{\lambda}^{\ast} ( \Theta_{M} -\mathcal{E}),
\end{split}
\]
where $U$ is a compact subset of $X$ that depends on the variation $\psi_\lambda$.

\begin{proposition} \label{prop:implEL}
A section $\psi$ of $M$ is a critical point of the Hamilton-Pontryagin action functional if and only if $\psi$ satisfies the 
{\bfi implicit Euler-Lagrange equations}
\begin{equation}\label{ELFieldEqn-M}
\begin{split}
\psi^{\ast}(\mathcal{V}_{M} \!\text{\Large$\lrcorner$} \;\Omega_{\mathcal{E}})=0,  \quad \mbox{for any \;$\mathcal{V}_{M} \in \mathfrak{X}^{\mathrm{V}}(M)$}.
\end{split}
\end{equation}
\end{proposition}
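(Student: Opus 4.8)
The plan is to compute $\delta S(\psi)$ directly via Cartan's formula for the Lie derivative, convert the integrand into the contraction of a variation field with $\Omega_{\mathcal{E}}$ plus a boundary term, and then invoke the fundamental lemma of the calculus of variations. First I would fix a vertical variation $\psi_\lambda = \nu_\lambda \circ \psi$ with infinitesimal generator $\mathcal{V}_M \in \mathfrak{X}^{\mathrm{V}}(M)$ defined along $\mathrm{Im}\,\psi$, extended arbitrarily to a neighborhood (the final answer will not depend on the extension). Then
\[
\delta S(\psi) = \frac{d}{d\lambda}\bigg|_{\lambda=0} \int_U \psi_\lambda^\ast(\Theta_M - \mathcal{E})
= \int_U \psi^\ast\!\left( \mathcal{L}_{\mathcal{V}_M}(\Theta_M - \mathcal{E}) \right),
\]
using that $\frac{d}{d\lambda}\big|_0 \psi_\lambda^\ast \omega = \psi^\ast \mathcal{L}_{\mathcal{V}_M}\omega$ for the flow $\nu_\lambda$ of $\mathcal{V}_M$. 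By Cartan's magic formula, $\mathcal{L}_{\mathcal{V}_M}(\Theta_M - \mathcal{E}) = \mathbf{d}\big( \mathcal{V}_M \,\text{\Large$\lrcorner$}\, (\Theta_M - \mathcal{E}) \big) + \mathcal{V}_M \,\text{\Large$\lrcorner$}\, \mathbf{d}(\Theta_M - \mathcal{E})$.

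Next I would simplify the second term. Since $\Omega_M = -\mathbf{d}\Theta_M$ (pulling back $\Omega = -\mathbf{d}\Theta$ along $\pi_{ZM}$, which commutes with $\mathbf{d}$) and $\Omega_{\mathcal{E}} = \Omega_M + \mathbf{d}\mathcal{E}$, we get $\mathbf{d}(\Theta_M - \mathcal{E}) = -\Omega_M - \mathbf{d}\mathcal{E} = -\Omega_{\mathcal{E}}$, so the second term is $-\,\mathcal{V}_M \,\text{\Large$\lrcorner$}\, \Omega_{\mathcal{E}}$. For the first term, by Stokes' theorem $\int_U \psi^\ast \mathbf{d}\big( \mathcal{V}_M \,\text{\Large$\lrcorner$}\, (\Theta_M - \mathcal{E}) \big) = \int_{\partial U} \psi^\ast\big( \mathcal{V}_M \,\text{\Large$\lrcorner$}\, (\Theta_M - \mathcal{E}) \big)$, and this vanishes because $\psi_\lambda$ agrees with $\psi$ outside the compact set $U$, so $\mathcal{V}_M$ vanishes on $\partial U$ (equivalently, one enlarges $U$ slightly so $\mathcal{V}_M \equiv 0$ near $\partial U$). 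Hence
\[
\delta S(\psi) = -\int_U \psi^\ast\big( \mathcal{V}_M \,\text{\Large$\lrcorner$}\, \Omega_{\mathcal{E}} \big).
\]

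Finally, the equivalence: if $\psi^\ast(\mathcal{V}_M \,\text{\Large$\lrcorner$}\, \Omega_{\mathcal{E}}) = 0$ for every $\mathcal{V}_M \in \mathfrak{X}^{\mathrm{V}}(M)$ then $\delta S(\psi) = 0$ for all variations, so $\psi$ is critical. Conversely, suppose $\psi$ is critical. Then $\int_U \psi^\ast(\mathcal{V}_M \,\text{\Large$\lrcorner$}\, \Omega_{\mathcal{E}}) = 0$ for all compactly supported vertical $\mathcal{V}_M$ and all $U$ containing its support; since the integrand is an $(n+1)$-form on $X$ of the form $f_{\mathcal{V}_M}(x)\,d^{n+1}x$ depending linearly and pointwise (through the value of $\mathcal{V}_M$ along $\mathrm{Im}\,\psi$) on $\mathcal{V}_M$, the fundamental lemma of the calculus of variations forces $\psi^\ast(\mathcal{V}_M \,\text{\Large$\lrcorner$}\, \Omega_{\mathcal{E}}) = 0$ pointwise for every $\mathcal{V}_M$. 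I expect the main obstacle to be purely bookkeeping rather than conceptual: namely, checking carefully that the vertical vector field $\mathcal{V}_M$, a priori only defined along $\mathrm{Im}\,\psi$, can be extended so that the above manipulations (Lie derivative, Cartan formula, Stokes) are legitimate, and that the resulting expression $\psi^\ast(\mathcal{V}_M \,\text{\Large$\lrcorner$}\, \Omega_{\mathcal{E}})$ depends only on $\mathcal{V}_M|_{\mathrm{Im}\,\psi}$ — which it does, since pulling back by $\psi$ only sees the values and first-order data along the image, and $\Omega_{\mathcal{E}}$ is a fixed form. A coordinate verification (deferred to Section~\ref{sec:coords}) that the components of $\psi^\ast(\mathcal{V}_M \,\text{\Large$\lrcorner$}\, \Omega_{\mathcal{E}})$ reproduce \eqref{intro:impFE} then confirms the "implicit Euler-Lagrange" terminology.
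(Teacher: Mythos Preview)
Your proposal is correct and follows essentially the same route as the paper's own proof: compute the first variation via the Lie derivative along the vertical generator $\mathcal{V}_M$, apply Cartan's magic formula and Stokes' theorem, use compact support to kill the boundary term, and conclude by the fundamental lemma. The paper's version is terser (it does not spell out the extension issue or the pointwise dependence on $\mathcal{V}_M|_{\mathrm{Im}\,\psi}$), but the argument is the same.
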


\begin{proof}
Recall that a variation of $\psi$ is given by $\psi_{\lambda}=\nu_{\lambda} \circ \psi$ where $\nu_{\lambda}: M \to M$ is the flow of a vertical vector field $\mathcal{V}_{M}$. Then, it follows that
\begin{equation*}
\begin{split}
\frac{d}{d\lambda}\bigg|_{\lambda=0}S(\psi_{\lambda})&=\frac{d}{d\lambda}\bigg|_{\lambda=0}\int_{U} \psi^{\ast}_{\lambda} (\Theta_{M} - \mathcal{E})\\
&=\int_{U} \psi^{\ast} \pounds_{{\mathcal{V}}_{M}} (\Theta_{M} -\mathcal{E})\\
&=-\int_{U} \psi^{\ast} (\mathcal{V}_{M} \! \text{\Large$\lrcorner$} \; (\Omega_{M} + \mathbf{d} \mathcal{E})) 
+ \int_U  \mathbf{d}( \psi^\ast (\mathcal{V}_M\! \text{\Large$\lrcorner$} \;
(\Theta_M - \mathcal{E})))
\\
&=-\int_{U} \psi^{\ast} (\mathcal{V}_{M} \! \text{\Large$\lrcorner$} \;\Omega_{\mathcal{E}})
+ \int_{\partial U}  \psi^\ast (\mathcal{V}_M\! \text{\Large$\lrcorner$} \;
(\Theta_M - \mathcal{E}))
\\
&=-\int_{U} \psi^{\ast} (\mathcal{V}_{M} \! \text{\Large$\lrcorner$} \;\Omega_{\mathcal{E}})
\end{split}
\end{equation*}
for all $\mathcal{V}_{M}$,
where we utilized Cartan's magic formula and Stokes' theorem and the fact that the $\pi_{XM}$-vertical vector field $\mathcal{V}_{M}: M \to VM$  is compactly supported in $U \subset X$.  A standard argument then shows that $\psi$ is a critical point of $S$ if and only if \eqref{ELFieldEqn-M} holds.
\end{proof}

\subsection{Coordinate Expressions} \label{sec:coords}

Employing local coordinates $(x^{\mu},y^{A},v^{A}_{\mu},p,p_{A}^{\mu})$ on $M$, the action functional is denoted by
\begin{align*}
S(\gamma)&=\int_{U}  \psi^{\ast}(\Theta_{M} -\mathcal{E})  \nonumber   \\
&=\int_{U} \left(p+p_{A}^{\mu} \frac{\partial y^{A}}{\partial x^{\mu}} \right)d^{n+1}x- \left\{\left(p+p_{A}^{\mu}v^{A}_{\mu})d^{n+1}x-L(x^{\mu},y^{A},v^{A}_{\mu}\right)d^{n+1}x \right\} \nonumber  \\
&=\int_{U} \left\{ p_{A}^{\mu} \left( \frac{\partial y^{A}}{\partial x^{\mu}}- v^{A}_{\mu} \right)+ L(x^{\mu},y^{A},v^{A}_{\mu})\right\} d^{n+1}x.
\end{align*}
A small coordinate calculation, similar to the one in the introduction, then shows that the implicit Euler-Lagrange equations \eqref{ELFieldEqn-M} are locally given by 
\begin{equation}\label{local-ELFieldEqn}
\frac{\partial y^{A}}{\partial x^{\mu}}=v^{A}_{\mu}, \quad \frac{\partial p_{A}^{\mu}}{\partial x^{\mu}}= \frac{\partial L}{\partial y^{A}}, \quad p_{A}^{\mu}= \frac{\partial L}{\partial v^{A}_{\mu}}.
\end{equation}

\subsection{Generalized Energy Constraint}

The coordinates on the Pontryagin bundle are $(x^\mu, y^A, v^A_\mu, p, p_A^\mu)$, and the Hamilton-Pontryagin principle gives equations for the fields $y^A$, $v^A_\mu$ and $p_A^\mu$, but leaves $p$ undetermined.  In order to fix $p$, we impose the {\it generalized energy constraint} 
\[
E \equiv p+p_{A}^{\mu}v^{A}_{\mu}-L(x^{\mu},y^{A},v^{A}_{\mu})=0.
\]
Together with the third equation in \eqref{local-ELFieldEqn}, this naturally recovers the covariant Legendre transformation \eqref{CovLegTrans}.  We summarize the results up to this point in the following theorem.

\begin{theorem} \label{thm:field_equations}
The following statements for a section $\psi : X \rightarrow M$ of $\pi_{XM}$ satisfying $E(\psi) = 0$ are equivalent:
\begin{itemize}
\item[(1)]  $\psi$ is a critical point of the Hamilton-Pontryagin action functional \eqref{HPActionFunc}.
\item[(2)]  $\psi^{\ast} (\mathcal{V}_{M} \! \text{\Large$\lrcorner$} \;\Omega_{\mathcal{E}})=0$ for all $\pi_{XM}$-vertical vector fields $\mathcal{V}_{M}$ on $M$.
\item[(3)] $\psi$ satisfies the implicit Euler-Lagrange equations \eqref{local-ELFieldEqn} together with the covariant Legendre transformation \eqref{CovLegTrans}.
\end{itemize}
\end{theorem}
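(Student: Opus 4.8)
The plan is to obtain the equivalence $(1) \Leftrightarrow (2)$ with no extra work and to establish $(2) \Leftrightarrow (3)$ by a direct computation in the adapted coordinates $(x^{\mu}, y^{A}, v^{A}_{\mu}, p, p^{\mu}_{A})$ on $M$. For the first equivalence, I would simply invoke Proposition~\ref{prop:implEL}, which already proves that a section $\psi$ is a critical point of the Hamilton-Pontryagin action functional \eqref{HPActionFunc} if and only if $\psi^{\ast}(\mathbf{i}_{\mathcal{V}_{M}}\Omega_{\mathcal{E}}) = 0$ for every $\pi_{XM}$-vertical vector field $\mathcal{V}_{M}$; the hypothesis $E(\psi) = 0$ is never used in that argument, so it restricts without change to sections satisfying $E(\psi) = 0$, and nothing more is required for $(1) \Leftrightarrow (2)$.

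For $(2) \Rightarrow (3)$ I would first compute $\Omega_{\mathcal{E}} = \Omega_{M} + \mathbf{d}\mathcal{E}$ explicitly, using \eqref{msform} for $\Omega_{M}$ and differentiating \eqref{generalizedenergy} for $\mathbf{d}\mathcal{E}$. Because $dx^{\mu} \wedge d^{n+1}x = 0$, the two $dp$-contributions cancel, and $\Omega_{\mathcal{E}}$ turns out to contain no $dp$ term at all. A general $\pi_{XM}$-vertical vector field has the form
\[
\mathcal{V}_{M} = V^{A}\frac{\partial}{\partial y^{A}} + V^{A}_{\mu}\frac{\partial}{\partial v^{A}_{\mu}} + V_{p}\frac{\partial}{\partial p} + W^{\mu}_{A}\frac{\partial}{\partial p^{\mu}_{A}},
\]
with no $\partial/\partial x^{\mu}$-component since $\mathcal{V}_{M}$ is vertical over $X$. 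Contracting with $\Omega_{\mathcal{E}}$ and pulling back along $\psi$ --- so that only the terms proportional to $d^{n+1}x$ survive --- one finds that $\psi^{\ast}(\mathbf{i}_{\mathcal{V}_{M}}\Omega_{\mathcal{E}})$ equals $d^{n+1}x$ times a linear combination of the arbitrary coefficients $V^{A}, V^{A}_{\mu}, V_{p}, W^{\mu}_{A}$. Setting each coefficient to zero, the coefficient of $W^{\mu}_{A}$ gives $\partial y^{A}/\partial x^{\mu} = v^{A}_{\mu}$, that of $V^{A}$ gives $\partial p^{\mu}_{A}/\partial x^{\mu} = \partial L/\partial y^{A}$, that of $V^{A}_{\mu}$ gives $p^{\mu}_{A} = \partial L/\partial v^{A}_{\mu}$, and that of $V_{p}$ vanishes identically --- which is precisely why the Hamilton-Pontryagin principle by itself leaves $p$ undetermined. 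These three equations are the implicit Euler-Lagrange equations \eqref{local-ELFieldEqn}. Substituting the momentum relation into the energy constraint $E(\psi) = p + p^{\mu}_{A} v^{A}_{\mu} - L = 0$ then gives $p = L - (\partial L/\partial v^{A}_{\mu}) v^{A}_{\mu}$, so that \eqref{local-ELFieldEqn} together with $E(\psi)=0$ is equivalent to the covariant Legendre transformation \eqref{CovLegTrans}.

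Finally, $(3) \Rightarrow (2)$ is obtained by reading the same computation in reverse: if \eqref{local-ELFieldEqn} holds, then every coefficient in the expansion of $\psi^{\ast}(\mathbf{i}_{\mathcal{V}_{M}}\Omega_{\mathcal{E}})$ vanishes, hence the whole pullback vanishes for all vertical $\mathcal{V}_{M}$; note that the relation for $p$ coming from \eqref{CovLegTrans} is not even needed in this direction. I expect the only delicate step to be the bookkeeping in the contraction $\mathbf{i}_{\mathcal{V}_{M}}\Omega_{\mathcal{E}}$ and the subsequent pullback --- in particular, tracking the signs produced when the mixed terms $dy^{A} \wedge dp^{\mu}_{A} \wedge d^{n}x_{\mu}$ are contracted, and correctly identifying which of the resulting $(n+1)$-forms are annihilated by $\psi^{\ast}$ --- but this is the same calculation already sketched in Section~\ref{sec:coords}, so no essentially new obstacle appears.
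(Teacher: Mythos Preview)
Your proposal is correct and follows essentially the same route as the paper. The paper does not give a separate proof of this theorem at all: it is stated as a summary (``We summarize the results up to this point in the following theorem''), with $(1)\Leftrightarrow(2)$ already established in Proposition~\ref{prop:implEL}, the local equivalence $(2)\Leftrightarrow$\eqref{local-ELFieldEqn} indicated in Section~\ref{sec:coords}, and the recovery of \eqref{CovLegTrans} from the energy constraint explained just before the theorem --- which is precisely the decomposition you carry out in detail.
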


\subsection{Implicit Field Equations Using Multivector Fields}

We now rewrite the field equations \eqref{local-ELFieldEqn} in a slightly different way, which will make the link with multi-Dirac structures in Section~\ref{sec:multidirac} clearer.  Up to this point, we have viewed the field equations as coming from the Hamilton-Pontryagin variational principle.  We now show how these equations can be understood directly in terms of the multisymplectic form.  A quick introduction to the geometry of multivector fields can be found in appendix~\ref{sec:multivector}.

To this end, we will use multivector fields on $M$ (see Refs.~\onlinecite{Ma1997, EcMuRo2002, FoPaRo2005}), which can be seen as the infinitesimal counterparts of sections $\psi : X \to M$ of the Pontryagin bundle, in the sense that a multivector field gives rise to a set of first-order PDEs, whose integral sections are sections of the Pontryagin bundle.  For the purpose of this paper, we will restrict our attention to multivector fields that are locally of the form 
\begin{equation} \label{multvectf} 
	\mathcal{X} = \bigwedge_{\mu = 1}^{n + 1} \mathcal{X}_{\mu} = \bigwedge_{\mu = 1}^{n + 1}
		\left( \frac{\partial}{\partial x^\mu}
			+ C^A_\mu \frac{\partial}{\partial y^A}
			+ C_{A\mu}^\nu \frac{\partial}{\partial p_A^\nu}
			+ C_\mu \frac{\partial}{\partial p} \right),
\end{equation}
where $C^A_\mu$, $C_{A\mu}^\nu$ and $C_\mu$ are local component functions of $\mathcal{X}$ on $M$.  Note in particular that the component  of the multivector field in the $v^A_\mu$-direction  is zero.  We will refer to multivector fields $\mathcal{X}$ which are locally given by \eqref{multvectf} as {\bfi partial multivector fields}.  

The characterization \eqref{multvectf} can be recast into an intrinsic form by noting that a partial multivector field $\mathcal{X}$ is a map from $M$ into $\bigwedge^{n+1} (TZ)$ with the following properties: 
\begin{itemize}
	\item $\mathcal{X}$ is \emph{decomposable}:  in other words, there exist $n + 1$ vector fields $\mathcal{X}_\mu$ on $M$ so that $\mathcal{X}$ can be written as the wedge product $\mathcal{X} = \wedge_{\mu = 1}^{n + 1} \mathcal{X}_\mu$.
	
	\item $\mathcal{X}$ is \emph{normalized}, in the sense that $\mathbf{i}_{\mathcal{X}} (\pi^\ast_{XM} \eta) = 1$ with $\eta$ the volume form on $X$.  As a result, the coefficient of $\partial/\partial x^\mu$ in \eqref{multvectf} is unity.
\end{itemize}

Since $TZ$ is a subbundle of $TM = T(J^1 Y) \times_{TY} TZ$, we can identify a partial multivector field $\mathcal{X}$ with a regular multivector field on $M$ which is \emph{$J^1 Y$-vertical}, i.e. which has no component along $\partial/\partial v^A_\mu$.  More information about multivector fields and their properties can be found in Refs.~\onlinecite{EcMuRo1998, EcMuRo2002}.

A section $\psi: X \to M$, given in local coordinates by 
\[
 \psi(x) = (x^\mu, y^A(x), y^A_\mu(x), p(x), p_A^\mu(x)), 
\]
is an {\bfi integral section} of $\mathcal{X}$ if it satisfies the following system of PDEs:
\begin{equation} \label{intsection}
	\frac{\partial y^A}{\partial x^\mu} = C^A_\mu, \quad
	\frac{\partial p_A^\nu}{\partial x^\mu} = C_{A\mu}^\nu, \quad 
	\frac{\partial p}{\partial x^\mu} = C_\mu,
\end{equation}
which express the condition that the image of $\psi$ is tangent to $\mathcal{X}$.  We now claim that the implicit field equations \eqref{local-ELFieldEqn} can be expressed using multivector fields and the generalized energy \eqref{generalizedenergy} as follows:

\begin{theorem} \label{thm:implicitsympl}
Let $E$ be the generalized energy \eqref{generalizedenergy} and consider a multivector field $\mathcal{X}$ on $M$ of the form \eqref{multvectf} such that 
\begin{equation} \label{imp_EL}
\mathbf{i}_{\mathcal{X}}\Omega_{M}=(-1)^{n+2}\mathbf{d}E,
\end{equation}
where $\dim X = n + 1$, and let $\psi: X \to M$ be an integral section of $\mathcal{X}$.  In local coordinates, where $\psi(x) = (x^\mu, y^A(x), y^A_\mu(x), p(x), p_A^\mu(x))$, we then have that $\psi$ satisfies the following set of local equations:
\begin{equation} \label{eqm-LDS}
	\frac{\partial p_A^\mu}{\partial x^\mu} =  \frac{\partial L}{\partial y^A}, \quad
	\frac{\partial y^A}{\partial x^\mu} = v^A_\mu, \quad
	p_A^\mu = \frac{\partial L}{\partial v^A_\mu},
\end{equation}
together with
\begin{equation} \label{encons-LDS}
	\frac{\partial}{\partial x^\mu} \left( p+p_A^\nu v^A_\nu - L \right) = 0.
\end{equation}
\end{theorem}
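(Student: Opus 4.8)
The plan is to establish both \eqref{eqm-LDS} and \eqref{encons-LDS} by a direct computation in the adapted coordinates $(x^\mu, y^A, v^A_\mu, p, p_A^\mu)$ on $M$, of the same flavour as the calculation already carried out in Section~\ref{sec:coords}. First I would record the local form of the two objects in \eqref{imp_EL}: pulling \eqref{msform} back along $\pi_{ZM}$ gives
\[
	\Omega_M = dy^A \wedge dp_A^\mu \wedge d^n x_\mu - dp \wedge d^{n+1}x ,
\]
while differentiating the generalized energy \eqref{generalizedenergy} gives
\[
	\mathbf{d}E = dp + v^A_\mu\, dp_A^\mu + \left( p_A^\mu - \frac{\partial L}{\partial v^A_\mu} \right) dv^A_\mu - \frac{\partial L}{\partial x^\mu}\, dx^\mu - \frac{\partial L}{\partial y^A}\, dy^A .
\]
Next I would expand $\mathbf{i}_{\mathcal{X}}\Omega_M$ by iterating the antiderivation property of the interior products $\mathbf{i}_{\mathcal{X}_\mu}$. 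The structural facts that make this tractable are that each $\mathcal{X}_\mu$ acts on the $dx^\nu$ exactly as $\partial/\partial x^\mu$, so that $\mathbf{i}_{\mathcal{X}_\mu}(d^{n+1}x) = d^n x_\mu$ and, by the normalization built into \eqref{multvectf}, $\mathbf{i}_{\mathcal{X}}(d^{n+1}x) = 1$; and that $\Omega_M$ involves no $dv^A_\mu$, so neither does $\mathbf{i}_{\mathcal{X}}\Omega_M$.

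With both sides of \eqref{imp_EL} written out in the coframe $\{dx^\mu, dy^A, dp_A^\mu, dp, dv^A_\mu\}$, the equations \eqref{eqm-LDS} fall out by comparing coefficients. The $dv^A_\mu$-coefficients force $p_A^\mu = \partial L/\partial v^A_\mu$ at once, since the left-hand side has none. The $dp$-coefficients agree exactly when the normalization constant equals $(-1)^{n+2}$, which is the reason this sign is built into the statement. Matching the $dp_A^\mu$-coefficients then yields $C^A_\mu = v^A_\mu$, and matching the $dy^A$-coefficients yields $\sum_\mu C_{A\mu}^\mu = \partial L/\partial y^A$. Substituting the integral-section relations \eqref{intsection}, namely $\partial y^A/\partial x^\mu = C^A_\mu$ and $\partial p_A^\nu/\partial x^\mu = C_{A\mu}^\nu$, converts these three identities into the three equations of \eqref{eqm-LDS}.

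For the generalized energy constraint \eqref{encons-LDS} it is cleanest not to inspect the $dx^\mu$-coefficients, but to contract \eqref{imp_EL} once more with a single vector field $\mathcal{X}_\mu$: since $\mathcal{X}_\mu$ already appears as a wedge factor of $\mathcal{X}$ we have $\mathcal{X} \wedge \mathcal{X}_\mu = 0$, hence $\mathbf{i}_{\mathcal{X}_\mu}\bigl(\mathbf{i}_{\mathcal{X}}\Omega_M\bigr) = 0$, and therefore $\mathbf{i}_{\mathcal{X}_\mu}\mathbf{d}E = \mathcal{X}_\mu(E) = 0$ for every $\mu$. Now along an integral section $\psi$, the equations \eqref{intsection} say that $T_x\psi(\partial/\partial x^\mu)$ and $\mathcal{X}_\mu(\psi(x))$ share the same $\partial/\partial y^A$-, $\partial/\partial p_A^\nu$- and $\partial/\partial p$-components, while $\mathcal{X}_\mu$ has no $\partial/\partial v^A_\nu$-component; hence $T_x\psi(\partial/\partial x^\mu) = \mathcal{X}_\mu(\psi(x)) + (\partial v^A_\nu/\partial x^\mu)\,\partial/\partial v^A_\nu$, so that
\[
	\frac{\partial}{\partial x^\mu}(E \circ \psi) = \psi^\ast\bigl(\mathcal{X}_\mu(E)\bigr) + \frac{\partial v^A_\nu}{\partial x^\mu}\, \frac{\partial E}{\partial v^A_\nu}.
\]
Since $\partial E/\partial v^A_\nu = p_A^\nu - \partial L/\partial v^A_\nu$ vanishes along $\psi$ by the momentum relation just obtained, both terms on the right vanish and we conclude $\partial_\mu(E \circ \psi) = 0$, which is precisely \eqref{encons-LDS}.

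The one genuinely delicate step is the sign bookkeeping in expanding $\mathbf{i}_{\mathcal{X}}\Omega_M$: contracting the decomposable $(n+1)$-vector $\mathcal{X}$ against the $(n+2)$-form $\Omega_M$ produces a sum of shuffle-sign contributions, and one must verify that the $dp$-component of the result equals $(-1)^{n+2}$ times that of $\mathbf{d}E$ — equivalently, that the factor $(-1)^{n+2}$ in \eqref{imp_EL} is forced and that the remaining coefficients then match the equations of Section~\ref{sec:coords}. Once the signs are settled, everything else is routine coefficient comparison together with the short contraction argument above, and agreement with the coordinate calculation of Section~\ref{sec:coords} serves as a consistency check on the sign tracking.
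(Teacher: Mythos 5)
Your derivation of the implicit Euler--Lagrange equations \eqref{eqm-LDS} follows the paper's route essentially verbatim: expand $\mathbf{i}_{\mathcal{X}}\Omega_M$ in the coframe $\{dx^\mu, dy^A, dp_A^\mu, dp, dv^A_\mu\}$, match coefficients against $(-1)^{n+2}\mathbf{d}E$ to obtain the relations \eqref{comp2}, and substitute the integral-section conditions \eqref{intsection}. You leave the actual expansion of $\mathbf{i}_{\mathcal{X}}\Omega_M$ (the paper's \eqref{contract}, established in Appendix~\ref{sec:comp} via Lemmas~\ref{lemma:wedge} and~\ref{lemma:decomp}) as a flagged-but-deferred computation, which is acceptable since you correctly identify which coefficient forces which equation. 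Where you genuinely diverge is the energy constraint \eqref{encons-LDS}. The paper extracts it from the $dx^\mu$-coefficient identity \eqref{comp1}, which contains the quadratic combinations $C^A_\mu C^\lambda_{A\lambda}-C^A_\lambda C^\lambda_{A\mu}$, and then simplifies using \eqref{comp2}, the integral-section relations, and the Leibniz rule. You instead contract \eqref{imp_EL} with a single factor $\mathcal{X}_\mu$ and use $(\mathbf{i}_{\mathcal{X}}\Omega_M)(\mathcal{X}_\mu)=\Omega_M(\mathcal{X}\wedge\mathcal{X}_\mu)=0$ to conclude $\mathcal{X}_\mu(E)=0$; this is exactly the mechanism behind Corollary~\ref{cor:prodprod} and Theorem~\ref{thm:energy}, imported into the present proof. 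The one delicate point in your route, which you handle correctly, is that $\mathcal{X}_\mu$ is not the tangent vector to the integral section: the discrepancy lies entirely in the $\partial/\partial v^A_\nu$-directions, and the correction term $(\partial v^A_\nu/\partial x^\mu)\,(\partial E/\partial v^A_\nu)$ is annihilated by the Legendre relation $p_A^\nu=\partial L/\partial v^A_\nu$ already obtained from the $dv$-coefficients. What your route buys is that the quadratic terms of \eqref{comp1} never need to be written down or manipulated, and the later energy-conservation theorem becomes transparent; what it shares with the paper's version is the dependence on having first established the third equation of \eqref{eqm-LDS} (the paper also uses it implicitly when applying the Leibniz rule). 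Both arguments are valid, and yours is arguably the cleaner way to see \eqref{encons-LDS}.
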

\begin{proof}
To show this, we compute the differential of the generalized energy \eqref{generalizedenergy}:
\[
\begin{split}
\mathbf{d}E&=\frac{\partial E}{\partial x^{\mu}}dx^{\mu}+\frac{\partial E}{\partial y^{A}}dy^{A}+\frac{\partial E}{\partial v^{A}_{\mu}}dv^{A}_{\mu}+\frac{\partial E}{\partial p_{A}^{\mu}}dp_{A}^{\mu}+\frac{\partial E}{\partial p}dp\\
&=\left(-\frac{\partial L}{\partial x^{\mu}}\right)dx^{\mu}+\left(-\frac{\partial L}{\partial y^{A}} \right)dy^{A}+\left(p_{A}^{\mu}-\frac{\partial L}{\partial v^{A}_{\mu}} \right)dv^{A}_{\mu}
+v^{A}_{\mu}dp_{A}^{\mu}+dp,
\end{split}
\]
and the contraction of the multivector field $\mathcal{X}$ given in \eqref{multvectf} with the multisymplectic form $\Omega_M$:
\begin{equation} \label{contract}
	\mathbf{i}_\mathcal{X} \Omega_{M} = 
		(-1)^{n+2} \left[
			(C^A_\mu C_{A\lambda}^\lambda - 
			 C^A_\lambda C^\lambda_{A\mu} - C_\mu) dx^\mu 
			 + C^A_\mu dp_A^\mu + C_{A\mu}^\mu dy^A + dp
		\right].
\end{equation}
As a result, we have that \eqref{imp_EL} holds if and only if 
\begin{equation} \label{comp1}
	C^A_\mu C_{A\lambda}^\lambda - C^A_\lambda C^\lambda_{A\mu}
	- C_\mu =  -\frac{\partial L}{\partial x^\mu}
\end{equation}
as well as 
\begin{equation} \label{comp2}
	C^A_\mu = v^A_\mu, 
		\quad 
	C_{A\mu}^\mu = \frac{\partial L}{\partial y^A}, 
		\quad
	p_A^\mu = \frac{\partial L}{\partial v^A_\mu}.
\end{equation}
Now, let $\psi$ be an integral section of $\mathcal{X}$, so that locally \eqref{intsection} holds.  From \eqref{comp2}, it follows that the component functions of $\psi$  satisfy
\[
	\frac{\partial y^A}{\partial x^\mu} = v^A_\mu, \quad \text{and} \quad \frac{\partial p_A^\mu}{\partial x^\mu} = \frac{\partial L}{\partial y^A},
\]
and together with the third relation in \eqref{comp2} these equations are just the Euler-Lagrange equations \eqref{eqm-LDS} in implicit form.  The expression \eqref{comp1} can now be rewritten by substituting into it the expression for $C^\lambda_{A\lambda}$ from \eqref{comp2} and rewriting the second term using the relation $C^A_\mu = v^A_\mu$ to obtain
\[
	C^A_\mu \frac{\partial L}{\partial y^A} - v^A_\lambda C^\lambda_{A\mu}
	- C_\mu =  -\frac{\partial L}{\partial x^\mu}.
\]
We now substitute the expressions from \eqref{intsection} determining the component functions of $\psi$.  The result is 
\[
	\frac{\partial y^A}{\partial x^\mu} \frac{\partial L}{\partial y^A} - v^A_\lambda \frac{\partial p_A^\lambda}{\partial x^\mu} - \frac{\partial p}{\partial x^\mu}
	=  -\frac{\partial L}{\partial x^\mu}.
\]
Using the Leibniz rule for the second term and regrouping the resulting terms, we then get 
\[
	\frac{\partial L}{\partial x^\mu} + \frac{\partial y^A}{\partial x^\mu} \frac{\partial L}{\partial y^A} + 
		\frac{\partial v^A_\lambda}{\partial x^\mu} \frac{\partial L}{\partial v^A_\lambda}
	= \frac{\partial }{\partial x^\mu} \left( p + p_A^\mu v^A_\mu \right),
\]	
which is precisely \eqref{encons-LDS}.
\end{proof}


\section{Multi-Dirac Structures and Field Theories}  \label{sec:multidirac}

In this section, we show how the implicit field equations $\mathbf{i}_{\mathcal{X}}\Omega_{M}=(-1)^{n+2}\mathbf{d}E$ of  theorem~\ref{thm:implicitsympl} can be described in terms of \emph{multi-Dirac structures}, which are a graded analogue of the concept of Dirac structures introduced by Courant in Ref.~\onlinecite{Co1990}. In order to keep the technical side of the paper to a minimum, we focus on the applications of multi-Dirac structures to field theories, and we leave proofs and a more thorough discussion to Ref.~\onlinecite{VaYoLe2011}.  We also point out that there are alternative approaches to the construction of a ``field-theoretic analogue'' of Dirac structures; see for instance Refs.~\onlinecite{BaHoRo2010, Za2012}.

\subsection{Pairings between Multivectors and Forms}

While multi-Dirac structures can in principle be defined on an arbitrary manifold $M$, we will focus for now on multi-Dirac structures on the Pontryagin bundle $M \to X$.  Recall that the dimension of the base space $X$ is written as $n + 1$.

Consider the spaces $\bigwedge^l(TM)$ of $l$-multivector fields on $M$ and $\bigwedge^k(T^\ast M)$ of $k$-forms on $M$. For $k \ge l$, there is a natural pairing between elements of $\Sigma \in\bigwedge^k(T^\ast M)$ and $\mathcal{X} \in \bigwedge^l(TM)$ given by
\begin{equation}\label{pairing-mvf}
\left< \Sigma, \mathcal{X} \right> :=\mathbf{i}_{\mathcal{X}} \Sigma \in \mbox{$\bigwedge$}^{k-l}(T^\ast M).
\end{equation}

We now introduce the {\bfi graded Pontryagin bundle of degree $r$} over $M$ as follows:
\[
P_{r} = \mbox{$\bigwedge^r(TM)$} \oplus \mbox{$\bigwedge^{n+2-r}(T^\ast M)$},
\]
where $r=1,...,n+1$.

Using the pairing in equation \eqref{pairing-mvf}, let us define the following  antisymmetric and symmetric pairings between the elements of $P_{r}$ and $P_{s}$ as follows.  For $(\mathcal{X}, \Sigma) \in P_{r}$ and $(\bar{\mathcal{X}}, \bar{\Sigma}) \in P_{s}$, where $r,s=1, \ldots ,n+1$, we put
\[
\left<\left< (\mathcal{X},  \Sigma), (\bar{\mathcal{X}}, \bar{\Sigma}) \right>\right>_{-} :=\frac{1}{2}\left( \mathbf{i}_{\bar{\mathcal{X}}}\Sigma-(-1)^{rs}\mathbf{i}_{\mathcal{X}}\bar{\Sigma} \right)
\]
and
\[
\left<\left< (\mathcal{X},  \Sigma), (\bar{\mathcal{X}}, \bar{\Sigma}) \right>\right>_{+} :=\frac{1}{2}\left( \mathbf{i}_{\bar{\mathcal{X}}}\Sigma+(-1)^{rs}\mathbf{i}_{\mathcal{X}}\bar{\Sigma} \right),
\]
each of which takes values in $\bigwedge^{n+2-r-s}(T^\ast M)$.  Hence, both of these pairing are identically zero whenever $n+2<r+s$.

Let $V_{s}$ be a subbundle of $P_{s}$. The {\bfi $r$-orthogonal complementary subbundle} of $V_{s}$ associated to the antisymmetric paring $\left<\left<,\right>\right>_{-}$  is the subbundle $(V_{s})^{\perp,r}$ of $P_{r}$ defined by
\[
\begin{split}
(V_{s})^{\perp,r}&= \{  (\mathcal{X},  \Sigma) \in P_{r} \mid 
\left<\left< (\mathcal{X},  \Sigma), (\bar{\mathcal{X}}, \bar{\Sigma}) \right>\right>_{-}=0 \quad \textrm{for all} \quad (\bar{\mathcal{X}}, \bar{\Sigma}) \in V_{s}
\}.
\end{split}
\]
Note that $(V_{s})^{\perp,r}$ is a subbundle of $P_r$, and that  $(V_{s})^{\perp,r}=P_r$ whenever $n+2 < r+s$.

\subsection{Almost Multi-Dirac Structures on Manifolds}  

The definition of an \emph{almost multi-Dirac structure} on $M$ mimics the standard definition of Ref.~\onlinecite{Co1990} of Dirac structures.  

\begin{definition}\label{DefMultiDirac} 
An {\bfi almost multi-Dirac structure of degree $n + 1$} on $M$ is a direct sum $D_1 \oplus \cdots \oplus D_{n+1}$ of subbundles 
\[
D_{r} \subset  P_{r} \quad \text{(for $r = 1, \ldots, n+1$)},
\]
which is {\bfi $(n+1)$-Lagrangian}; namely, the sequence of the bundles $D_{r}$ are {\bfi maximally $(n+1)$-isotropic:}
\begin{equation} \label{isotropy}
D_{r}=(D_{s})^{\perp, r}
\end{equation}
for $r, s=1,...,n+1$ and where $r+s \le n+2$. 
\end{definition}

When no confusion can arise, we will refer to the direct sum $D_1 \oplus \cdots \oplus D_{n+1}$ simply as $D$.   In the companion paper Ref.~\onlinecite{VaYoLe2011}, we define a graded version of the Courant bracket, and we refer to \emph{integrable multi-Dirac structures} as almost multi-Dirac structures which are closed under the graded Courant bracket.

For the case of classical field theories, it will turn out that only the orthogonal complements of the multi-Dirac structure of the ``lowest'' and ``highest'' order (namely, $r=1$ and $r=n+1$ respectively) play an essential role to formulate the field equations.

\begin{proposition}
Let $D = D_1 \oplus \cdots \oplus D_{n+1}$ be a multi-Dirac structure of degree $n + 1$.  For any $(\mathcal{X},\Sigma) \in D_{r}$ and $(\bar{\mathcal{X}},\bar{\Sigma}) \in D_{s}$, the following relation holds:
\begin{equation}\label{LagProp}
 \mathbf{i}_{\bar{\mathcal{X}}}\Sigma-(-1)^{rs}\mathbf{i}_{\mathcal{X}}\bar{\Sigma} =0.
\end{equation}
\end{proposition}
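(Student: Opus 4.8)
The plan is to derive the relation \eqref{LagProp} directly from the maximal isotropy condition \eqref{isotropy} that defines a multi-Dirac structure, splitting the argument into the two regimes $r + s \le n + 2$ and $r + s > n + 2$.

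First I would handle the main case $r + s \le n + 2$. Here \eqref{isotropy} asserts $D_{r} = (D_{s})^{\perp,r}$, so the given element $(\mathcal{X},\Sigma) \in D_{r}$ in particular belongs to $(D_{s})^{\perp,r}$. Unwinding the definition of the $r$-orthogonal complement of $D_{s}$, this says precisely that $\left<\left< (\mathcal{X},\Sigma),(\bar{\mathcal{X}},\bar{\Sigma})\right>\right>_{-} = 0$ for every $(\bar{\mathcal{X}},\bar{\Sigma}) \in D_{s}$. Substituting the explicit formula $\left<\left< (\mathcal{X},\Sigma),(\bar{\mathcal{X}},\bar{\Sigma})\right>\right>_{-} = \frac{1}{2}\left(\mathbf{i}_{\bar{\mathcal{X}}}\Sigma - (-1)^{rs}\mathbf{i}_{\mathcal{X}}\bar{\Sigma}\right)$ and cancelling the factor $\frac{1}{2}$ then yields $\mathbf{i}_{\bar{\mathcal{X}}}\Sigma - (-1)^{rs}\mathbf{i}_{\mathcal{X}}\bar{\Sigma} = 0$, which is \eqref{LagProp}.

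Next I would dispose of the complementary range $r + s > n + 2$, for which \eqref{isotropy} makes no assertion. In that range a pure degree count suffices: $\Sigma$ has degree $n+2-r$ while $\bar{\mathcal{X}}$ has degree $s > n+2-r$, so $\mathbf{i}_{\bar{\mathcal{X}}}\Sigma = 0$; symmetrically $\mathbf{i}_{\mathcal{X}}\bar{\Sigma} = 0$ since $\bar{\Sigma}$ has degree $n+2-s < r$. Hence both terms in \eqref{LagProp} vanish identically, in accordance with the earlier observation that the pairings are zero and $(D_{s})^{\perp,r} = P_{r}$ whenever $n+2 < r+s$.

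I do not expect any genuine obstacle: the content of the proposition is essentially already built into Definition~\ref{DefMultiDirac}, and the proof amounts to unwinding the definitions of the antisymmetric pairing and of the $r$-orthogonal complement, together with a trivial degree argument in the degenerate range. The only point worth flagging is that the isotropy hypothesis is imposed only for $r + s \le n+2$, so the range $r + s > n + 2$ has to be treated separately (which it is, trivially).
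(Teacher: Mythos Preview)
Your proposal is correct and follows essentially the same approach as the paper, which simply states that the relation is clear from the maximally $r$-isotropic property of $D_r$. Your version is in fact more thorough, since you explicitly treat the degenerate range $r+s>n+2$ by a degree count, whereas the paper's one-line proof leaves this implicit.
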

\begin{proof}
This is clear from the $r$-Lagrangian (maximally $r$-isotropic) property of $D_{r}$. 
\end{proof}

The above $r$-Lagrangian property of the multi-Dirac structure $D_{r}$ in equation \eqref{LagProp} may be understood as the field-theoretic analogue of the {\bfi virtual work principle} in mechanics and is related to {\bfi Tellegen's theorem} in electric circuits. 

\paragraph*{Example.}
Let $M$ be a manifold with a multi-Dirac structure $D_1$ of degree $1$.  The isotropy property then becomes
\[
	D_1 = (D_1)^{\perp, 1},  
\]
where $P_1 = TM \oplus T^\ast M$ and the pairing is given by 
\[
	\left<\left< (\mathcal{X},  \Sigma), (\bar{\mathcal{X}}, \bar{\Sigma}) \right>\right>_{-} =
	\frac{1}{2}\left( \mathbf{i}_{\bar{\mathcal{X}}}\Sigma + \mathbf{i}_{\mathcal{X}}\bar{\Sigma} \right),
\]
for all $(\mathcal{X}, \Sigma), (\bar{\mathcal{X}}, \bar{\Sigma}) \in P_1$.  This is nothing but the definition of an (almost) Dirac structure developed in Ref.~\onlinecite{Co1990}.  Our concept of multi-Dirac structures also includes, and is in fact equivalent to, the so-called \emph{higher-order Dirac structures} of Ref.~\onlinecite{Za2012}.

\subsection{Multi-Dirac Structures Induced by Differential Forms}

Consider an $(n+2)$-form $\Omega_{M}$ on a manifold $M$.  We will show that the graph of $\Omega_M$ (in the sense defined below) defines a multi-Dirac structure of degree $n + 1$.  This example of a multi-Dirac structure will be used in our subsequent treatment of classical field theories, where $\Omega_M$ will be the canonical multisymplectic form, but for now $\Omega_M$ can be an arbitrary form.  

\begin{proposition} \label{prop: canondirac}
Let $\Omega_{M}$ be an arbitrary $(n + 2)$-form on $M$ and define the following subbundles $D_1, \ldots, D_{n + 1}$, where $D_r \subset P_r$ is given by
\begin{equation}
\label{canonms}
D_{r} =\left\{ (\mathcal{X}, \Sigma)  \in P_{r}  \mid  
 \;\mathbf{i}_{\mathcal{X}} \Omega_{M}=\Sigma  \right\}
\end{equation}
for $r = 1, \ldots, n + 1$.  Then $D = D_1 \oplus \cdots \oplus D_{n + 1}$ is a  multi-Dirac structure of degree $n + 1$ on $M$.
\end{proposition}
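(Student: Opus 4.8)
The plan is to check, directly from the defining formula \eqref{canonms}, the two requirements in Definition~\ref{DefMultiDirac}: that each $D_r$ is a subbundle of $P_r$, and that the family $\{D_r\}$ is maximally $(n+1)$-isotropic, i.e.\ $D_r = (D_s)^{\perp,r}$ whenever $1 \le r,s \le n+1$ and $r+s \le n+2$. The first point is essentially free: the assignment $\mathcal{X} \mapsto \mathbf{i}_{\mathcal{X}}\Omega_M$ is a smooth vector bundle morphism $\bigwedge^r(TM) \to \bigwedge^{n+2-r}(T^\ast M)$, and $D_r$ is exactly its graph, hence a subbundle of $P_r$. So the real content lies in the isotropy condition \eqref{isotropy}.

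The computational ingredient I would isolate first is the graded commutation rule for successive interior products: for an $r$-multivector $\mathcal{X}$ and an $s$-multivector $\bar{\mathcal{X}}$,
\[
\mathbf{i}_{\bar{\mathcal{X}}}\,\mathbf{i}_{\mathcal{X}}\,\Omega_M = \mathbf{i}_{\mathcal{X}\wedge\bar{\mathcal{X}}}\,\Omega_M = (-1)^{rs}\,\mathbf{i}_{\bar{\mathcal{X}}\wedge\mathcal{X}}\,\Omega_M = (-1)^{rs}\,\mathbf{i}_{\mathcal{X}}\,\mathbf{i}_{\bar{\mathcal{X}}}\,\Omega_M,
\]
which for decomposable multivectors is just the antisymmetry $\mathcal{X}\wedge\bar{\mathcal{X}} = (-1)^{rs}\bar{\mathcal{X}}\wedge\mathcal{X}$ together with the convention $\mathbf{i}_{\mathcal{X}\wedge\bar{\mathcal{X}}} = \mathbf{i}_{\bar{\mathcal{X}}}\circ\mathbf{i}_{\mathcal{X}}$, and extends by bilinearity. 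Granting this, the inclusion $D_r \subseteq (D_s)^{\perp,r}$ is immediate: for $(\mathcal{X},\Sigma)\in D_r$ and $(\bar{\mathcal{X}},\bar{\Sigma})\in D_s$ we have $\Sigma = \mathbf{i}_{\mathcal{X}}\Omega_M$ and $\bar{\Sigma} = \mathbf{i}_{\bar{\mathcal{X}}}\Omega_M$, so $\mathbf{i}_{\bar{\mathcal{X}}}\Sigma - (-1)^{rs}\mathbf{i}_{\mathcal{X}}\bar{\Sigma} = \mathbf{i}_{\bar{\mathcal{X}}}\mathbf{i}_{\mathcal{X}}\Omega_M - (-1)^{rs}\mathbf{i}_{\mathcal{X}}\mathbf{i}_{\bar{\mathcal{X}}}\Omega_M = 0$, which is precisely \eqref{LagProp}, hence isotropy holds.

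For the reverse inclusion $(D_s)^{\perp,r} \subseteq D_r$ — the ``maximality'' half — I would take $(\mathcal{X},\Sigma)\in(D_s)^{\perp,r}$ and pair it against the elements $(\bar{\mathcal{X}},\mathbf{i}_{\bar{\mathcal{X}}}\Omega_M)\in D_s$, with $\bar{\mathcal{X}}$ ranging over all (decomposable) $s$-multivectors. Vanishing of the antisymmetric pairing together with the commutation rule yields $\mathbf{i}_{\bar{\mathcal{X}}}\Sigma = \mathbf{i}_{\bar{\mathcal{X}}}\mathbf{i}_{\mathcal{X}}\Omega_M$, that is, $\mathbf{i}_{\bar{\mathcal{X}}}\beta = 0$ for the $(n+2-r)$-form $\beta := \Sigma - \mathbf{i}_{\mathcal{X}}\Omega_M$ and every $s$-multivector $\bar{\mathcal{X}}$. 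Since $1 \le s \le n+2-r = \deg\beta$, writing $\bar{\mathcal{X}} = X_1\wedge\cdots\wedge X_s$ and evaluating the resulting $(n+2-r-s)$-form on arbitrary further vectors $X_{s+1},\dots,X_{n+2-r}$ shows $\beta(X_1,\dots,X_{n+2-r}) = 0$ for all choices, hence $\beta = 0$ and $(\mathcal{X},\Sigma)\in D_r$. Combining the two inclusions gives $D_r = (D_s)^{\perp,r}$ for every admissible pair $(r,s)$, so $D = D_1\oplus\cdots\oplus D_{n+1}$ is an (almost) multi-Dirac structure of degree $n+1$.

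I do not anticipate a genuine obstacle: the statement is purely fiberwise and algebraic, and uses neither $\mathbf{d}\Omega_M = 0$ nor nondegeneracy of $\Omega_M$. The only points demanding care are the sign bookkeeping in the interior-product commutation rule, the observation that isotropy need only be tested against decomposable multivectors (which span $\bigwedge^s(TM)$ at each point), and the elementary fact that an $(n+2-r)$-form annihilated by contraction with all $s$-multivectors, for any $1 \le s \le n+2-r$, must vanish.
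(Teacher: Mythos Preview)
Your proof is correct and follows essentially the same route as the paper: both inclusions $D_r \subseteq (D_s)^{\perp,r}$ and $(D_s)^{\perp,r} \subseteq D_r$ are obtained from the graded commutation rule $\mathbf{i}_{\bar{\mathcal{X}}}\mathbf{i}_{\mathcal{X}}\Omega_M = (-1)^{rs}\mathbf{i}_{\mathcal{X}}\mathbf{i}_{\bar{\mathcal{X}}}\Omega_M$, with the second inclusion reducing to the observation that a form annihilated by contraction with all $s$-multivectors (for $s$ at most its degree) must vanish. Your write-up is in fact slightly more explicit than the paper's on two points---the subbundle property of $D_r$ as a graph, and the justification that $\mathbf{i}_{\bar{\mathcal{X}}}\beta = 0$ for all $\bar{\mathcal{X}}$ forces $\beta = 0$---but the argument is the same.
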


\begin{proof}
To prove that $D = D_1 \oplus \cdots \oplus D_{n + 1}$ is a multi-Dirac structure of degree $n + 1$, we need to check the isotropy property \eqref{isotropy}, namely, $D_{r}=D_{s}^{\perp,r}$ for all $r, s = 1, \ldots, n + 1$, with $r + s \le n + 1$.

Let us first show that $D_{r} \subset D_{s}^{\perp,r}$. Let $(\mathcal{X},\Sigma) \in D_{r}$
and $(\bar{\mathcal{X}},\bar{\Sigma}) \in D_{s}$. Then, it follows that
\[
\begin{split}
&\left<\left< (\mathcal{X},\Sigma),(\bar{\mathcal{X}},\bar{\Sigma})\right>\right>_{-}\\
&\hspace{1cm}=\frac{1}{2}\left\{\mathbf{i}_{\bar{\mathcal{X}}} \Sigma +(-1)^{rs+1}\mathbf{i}_{\mathcal{X}} \bar{\Sigma}\right\}\\
&\hspace{1cm}=\frac{1}{2}\left\{\mathbf{i}_{\bar{\mathcal{X}}}\mathbf{i}_{\mathcal{X}} \Omega_{M} +(-1)^{rs+1} \mathbf{i}_{\mathcal{X}}\mathbf{i}_{\bar{\mathcal{X}}}\Omega_{M}\right\}
\\
&\hspace{1cm}=0,
\end{split}
\]
since $\mathbf{i}_{\bar{\mathcal{X}}}\mathbf{i}_{\mathcal{X}}\Omega_{M}=(-1)^{rs} \mathbf{i}_{\mathcal{X}}\mathbf{i}_{\bar{\mathcal{X}}}\Omega_{M}$.
Thus, $D_{r} \subset D_{s}^{\perp,r}$.
\medskip

Next, let us show that $D_{s}^{\perp,r} \subset D_{r}$. Let $(\bar{\mathcal{X}}, \bar{\Sigma}) \in D_{s}^{\perp,r}$. By definition of $D_{s}^{\perp,r}$,
\[
\mathbf{i}_{\bar{\mathcal{X}}} \Sigma + (-1)^{rs+1}\mathbf{i}_{\mathcal{X}} \bar{\Sigma}=0
\] 
for all $(\mathcal{X},\Sigma)  \in D_{r}$, i.e. 
$\mathcal{X} \in \bigwedge^{r}(TZ)$ such that $\mathbf{i}_{\mathcal{X}} \Omega_{M}=\Sigma$.  Then, it follows
\[
\begin{split}
\mathbf{i}_{\bar{\mathcal{X}}} \Sigma + (-1)^{rs+1}\mathbf{i}_{\mathcal{X}} \bar{\Sigma} &=\mathbf{i}_{\bar{\mathcal{X}}} \mathbf{i}_{\mathcal{X}} \Omega_{M} + (-1)^{rs+1}\mathbf{i}_{\mathcal{X}} \bar{\Sigma} \\
&=\mathbf{i}_{\mathcal{X}} \left\{  (-1)^{rs} \mathbf{i}_{\bar{\mathcal{X}}} \Omega_{M}+ (-1)^{rs+1} \bar{\Sigma}\right\}\\
&=0
\end{split}
\]
for all $\mathcal{X} \in \bigwedge^{r}(TZ)$ where $r+s \le n+2$. Therefore, one has
\[
\mathbf{i}_{{\bar{\mathcal{X}}}}\Omega_{M}=\bar{\Sigma}.
\]

Thus, $D_{s}^{\perp, r} \subset D_{r}$ and it follows that $D_{r}=D_{s}^{\perp, r}$, so that $D = D_1 \oplus \cdots \oplus D_{n + 1}$ is a multi-Dirac structure of degree $n + 1$ on $M$.
\end{proof}

\paragraph*{Local Characterization of Multi-Dirac Structures.}

In Ref.~\onlinecite{Za2012}, Zambon  has given a complete local characterization of multi-Dirac structures.  The following theorem is a combination of corollary~3.9 and lemma~4.3 in that paper.

\begin{theorem} \label{thm:localchar}
	Let $T$ be an arbitrary vector space and consider a form $\omega \in \bigwedge^{n+1} T^\ast$ and a subspace $S \subset T$ such that $\dim S \le \dim T - n$ or $S = T$.  Then the sets 
	\[
		D_r = \{ (X, \Sigma)  \mid\; {X} \in S \wedge (\wedge^{r-1} T), \quad \Sigma - \mathbf{i}_X \omega \in \wedge^{n+1-r} S^\circ \},
	\]
	$r = 1, \ldots, n$, 
	form a multi-Dirac structure.  Conversely, each multi-Dirac structure on $T$ is of this form, for a suitable choice of $S$ and $\omega$.
\end{theorem}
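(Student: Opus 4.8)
The plan is to prove the two halves by multilinear algebra. Throughout I use the elementary observation that $\mathbf{i}_v \sigma = 0$ whenever $v \in S$ and $\sigma \in \wedge^{m} S^\circ$ (each covector factor of $\sigma$ kills $v$), and hence $\mathbf{i}_X \sigma = 0$ whenever $X$ lies in $S \wedge (\wedge^{p-1} T)$, since such an $X$ may be taken decomposable with one leg in $S$. For the first assertion — that the displayed $D_r$ form a multi-Dirac structure — the isotropy inclusion $D_r \subseteq (D_s)^{\perp, r}$ (for $r + s \le n + 1$) generalizes the proof of Proposition~\ref{prop: canondirac} almost verbatim: writing $\Sigma = \mathbf{i}_X \omega + \sigma$ and $\bar{\Sigma} = \mathbf{i}_{\bar X} \omega + \bar{\sigma}$ with $\sigma, \bar{\sigma}$ valued in the appropriate $\wedge S^\circ$, the $S^\circ$-parts drop out of $\langle\langle (X,\Sigma),(\bar X, \bar{\Sigma})\rangle\rangle_{-}$ because $X$ and $\bar X$ each have a leg in $S$, leaving $\tfrac12(\mathbf{i}_{\bar X} \mathbf{i}_X \omega - (-1)^{rs} \mathbf{i}_X \mathbf{i}_{\bar X} \omega) = 0$ by graded commutation of interior products.

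\emph{Maximality.} For the reverse inclusion $(D_s)^{\perp, r} \subseteq D_r$, take $(\bar X, \bar{\Sigma}) \in (D_s)^{\perp, r}$ and test it against two families of elements of $D_s$. Against $(X, \mathbf{i}_X \omega)$ with $X \in S \wedge (\wedge^{s-1} T)$ arbitrary, the pairing condition collapses to $\mathbf{i}_X(\bar{\Sigma} - \mathbf{i}_{\bar X}\omega) = 0$ for all such $X$; choosing a basis of $S$ and extending it to $T$, a component inspection shows that — precisely when $r + s \le n + 1$, so that $s$-vectors with a leg in $S$ can detect an $(n+1-r)$-form in every non-$S^\circ$ direction — this forces $\bar{\Sigma} - \mathbf{i}_{\bar X}\omega \in \wedge^{n+1-r} S^\circ$. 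Against $(0, \Sigma)$ with $\Sigma \in \wedge^{n+1-s} S^\circ$ the pairing condition becomes $\mathbf{i}_{\bar X}\Sigma = 0$ for all such $\Sigma$; here the hypothesis $\dim S \le \dim T - n$ (or $S = T$) ensures $\dim S^\circ \ge n \ge n+1-s$, so $\wedge^{n+1-s} S^\circ$ is large enough to separate every basis $r$-vector not in $S \wedge (\wedge^{r-1} T)$, and the same basis computation gives $\bar X \in S \wedge (\wedge^{r-1} T)$. Together these are exactly the two defining conditions of $D_r$, so $D_r = (D_s)^{\perp, r}$ and $D_1 \oplus \cdots \oplus D_n$ is a multi-Dirac structure; when $S = T$ this reduces to the graph construction of Proposition~\ref{prop: canondirac}.

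\emph{Converse.} Given a multi-Dirac structure $D = D_1 \oplus \cdots \oplus D_n$ on $T$, I would reconstruct the pair $(S,\omega)$. Set $S := \mathrm{pr}_T(D_1)$, the image of $D_1$ under projection to its $T$-factor. The isotropy relation $D_1 = (D_1)^{\perp,1}$ shows that $(0,\Sigma) \in D_1$ iff $\Sigma \in \wedge^{n} S^\circ$, and that for $X \in S$ the $\wedge^{n} T^\ast$-component of any $(X,\Sigma) \in D_1$ is determined modulo $\wedge^{n} S^\circ$; the resulting linear map $X \mapsto [\Sigma]$, together with the antisymmetry encoded in $\langle\langle\cdot,\cdot\rangle\rangle_{-}$, is shown to be induced by an $(n+1)$-form $\omega$ on $T$, unique modulo $\wedge^{n+1} S^\circ$ — an ambiguity invisible to $D$, since a form in $\wedge^{n+1} S^\circ$ contracts to zero with any multivector having a leg in $S$. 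One checks en route that $\dim S \le \dim T - n$ unless $S = T$. Finally, using the isotropy relations $D_r = (D_s)^{\perp, r}$ for $s \le n + 1 - r$ one shows, by induction on $r$, that $D_r$ is contained in the set displayed in the theorem; since that set has already been proved to be a multi-Dirac structure, hence maximally isotropic, the inclusions must be equalities.

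\emph{Main obstacle.} The forward direction is routine interior-product and basis bookkeeping. The real work is in the converse — specifically, producing a single $(n+1)$-form $\omega$ compatible with \emph{all} the $D_r$ at once, and showing that the isotropy axioms leave no freedom beyond the choice of $(S,\omega)$. This is precisely the content of Zambon's Lemma~4.3, which we invoke directly rather than reprove.
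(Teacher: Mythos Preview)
The paper does not actually prove Theorem~\ref{thm:localchar}: it simply attributes the result to Zambon (stating that it is ``a combination of corollary~3.9 and lemma~4.3'' of Ref.~\onlinecite{Za2012}) and moves on. So there is no paper-side argument to compare against.

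Your proposal goes strictly further than the paper. The forward direction --- isotropy and maximality of the displayed $D_r$ --- is sketched correctly: the isotropy half is the natural extension of Proposition~\ref{prop: canondirac}, and your two-family test for maximality (against $(X,\mathbf{i}_X\omega)$ with $X\in S\wedge(\wedge^{s-1}T)$, and against $(0,\Sigma)$ with $\Sigma\in\wedge^{n+1-s}S^\circ$) is exactly the right mechanism; the dimension hypothesis $\dim S^\circ\ge n$ is used precisely where you say, to guarantee that $\wedge^{n+1-s}S^\circ$ has enough basis elements to detect every $r$-vector with no leg in $S$. For the converse you end up invoking Zambon's Lemma~4.3 directly, which is effectively what the paper does by citing the theorem wholesale. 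In short, your write-up is a genuine elaboration of a result the paper merely quotes; nothing in it conflicts with the paper, and the forward half could stand on its own as a proof.
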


\subsection{Applications to Implicit Field Theories}

We now consider Lagrangian field theories in the context of multi-Dirac structures.  Consider a fiber bundle $\pi_{XY}: Y \to X$ and let $M = J^1 Y \times_Y Z$ be the Pontryagin bundle.  Let $\mathcal{L}$ be a Lagrangian density on $J^1 Y$ and denote the corresponding generalized energy density by $\mathcal{E}$.  Let $\Omega$ be the canonical multisymplectic structure on $Z$ and let $\Omega_M$ be the pre-multisymplectic $(n+2)$-form defined on the Pontryagin bundle by $\Omega_{M}=\pi_{ZM}^{\ast}\Omega$.
  
Using $\Omega_M$ we can define a multi-Dirac structure $D = D_1 \oplus \cdots \oplus D_{n + 1}$ of degree $n + 1$ on $M$ by the construction of proposition~\ref{prop: canondirac}.  We refer to $D$ as the {\bfi canonical multi-Dirac structure} on $M$.  Explicitly, $D = D_1 \oplus \cdots \oplus D_{n + 1}$ is given by 
\[
	D_r = \{ (\mathcal{X}, \mathbf{i}_{\mathcal{X}} \Omega_M) \mid  \mathcal{X} \in \mbox{$\bigwedge$}^r(TM) \},
\]
for $r = 1, \ldots, n + 1$.   As $\mathbf{d} \Omega_M = 0$, this multi-Dirac structure turns out be \emph{integrable}; see Ref.~\onlinecite{VaYoLe2011}.

From theorem~\ref{thm:implicitsympl}, we have that the implicit field equations can be written in terms of a partial multivector field $\mathcal{X}$ and the generalized energy $E$ as 
\[
	\mathbf{i}_{\mathcal{X}}\Omega_{M}=(-1)^{n+2}\mathbf{d}E.
\]
This can be expressed in terms of the multi-Dirac structure induced by $\Omega_M$ by the condition $(\mathcal{X}, (-1)^{n+2}\mathbf{d}E) \in D_{n+1}$.  Note that the other components $D_1, \ldots, D_n$ of the multi-Dirac structure are \emph{not needed to write down the field equations}.

Taking our cue from this observation, we now define a special class of field theories, whose field equations are specified in terms of a Lagrangian $\mathcal{L}$, a multivector field $\mathcal{X}$, and a multi-Dirac structure $D$ on the Pontryagin bundle $M$.

\begin{definition} \label{def:lagrangedirac}
Let $D = D_1 \oplus \cdots \oplus D_{n + 1}$ be a multi-Dirac structure of degree $n + 1$ on the Pontryagin bundle $M$ and consider a Lagrangian density $\mathcal{L}$ with associated generalized energy $E$ given by \eqref{generalizedenergy}.  A {\bfi Lagrange-Dirac system for field theories} is defined by a triple $(\mathcal{X},E, D_{n+1})$, where $\mathcal{X}$ is a normalized, decomposable partial vector field, so that 
\[
(\mathcal{X}, (-1)^{n+2}\mathbf{d}E) \in D_{n+1}.
\]
\end{definition}
Note that in the definition of a Lagrange-Dirac system, only the highest-order component $D_{n + 1}$ of the multi-Dirac structure appears.  

\begin{theorem} \label{thm:energy}
For any Lagrange-Dirac system $(\mathcal{X},E,D_{n+1})$, the condition of  {\bfi conservation of the generalized energy}
\begin{equation} \label{econs}
	\mathcal{X} \prodint \mathbf{d} E = 0
\end{equation}
is satisfied.
In other words,  the generalized energy $E$ is constant along the solutions of the Lagrange-Dirac system $(\mathcal{X},E,D_{n+1})$ (i.e. the integral manifolds of $\mathcal {X}$).
\end{theorem}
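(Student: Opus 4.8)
The plan is to reduce \eqref{econs} to a pointwise statement about the alternating form behind $D_{n+1}$ and then invoke antisymmetry. First I would unwind what $\mathcal{X}\prodint\mathbf{d}E=0$ says. Since $\mathcal{X}$ is normalized it is nowhere zero, and being decomposable we may write $\mathcal{X}=\mathcal{X}_1\wedge\cdots\wedge\mathcal{X}_{n+1}$ with the $\mathcal{X}_\mu$ pointwise linearly independent (normalization even fixes the component of $\mathcal{X}_\mu$ along $\partial/\partial x^\nu$ to be $\delta^\nu_\mu$, as in \eqref{multvectf}). Contracting the $1$-form $\mathbf{d}E$ into this decomposable $(n+1)$-vector yields a linear combination of the $n+1$ linearly independent $n$-vectors $\mathcal{X}_1\wedge\cdots\widehat{\mathcal{X}_\mu}\cdots\wedge\mathcal{X}_{n+1}$ whose coefficients are, up to sign, the functions $\mathbf{i}_{\mathcal{X}_\mu}\mathbf{d}E$. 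Hence \eqref{econs} is equivalent to $\mathbf{i}_{\mathcal{X}_\mu}\mathbf{d}E=0$ for every $\mu$, that is, to $\mathbf{d}E$ vanishing on $\mathrm{span}(\mathcal{X}_1,\ldots,\mathcal{X}_{n+1})$; and that, in turn, is precisely the statement that $E$ is constant along the integral sections of $\mathcal{X}$. So it suffices to prove this vanishing.

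Now I would feed in the defining relation $(\mathcal{X},(-1)^{n+2}\mathbf{d}E)\in D_{n+1}$. By the local characterization of theorem~\ref{thm:localchar}, the fiber of $D_{n+1}$ has the form $\{(\mathcal{X},\Sigma)\mid \mathcal{X}\in S\wedge\bigwedge^{n}(T_pM),\ \Sigma-\mathbf{i}_{\mathcal{X}}\omega\in S^\circ\}$ for a suitable $(n+2)$-form $\omega$ and subspace $S\subseteq T_pM$; for the canonical multi-Dirac structure induced by $\Omega_M$ one has $\omega=\Omega_M$ and $S=T_pM$. Thus $(-1)^{n+2}\mathbf{d}E=\mathbf{i}_{\mathcal{X}}\omega+\beta$ with $\beta\in S^\circ$. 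Evaluating on $\mathcal{X}_\mu$ and using that $\mathbf{i}_{\mathcal{X}_\mu}\mathbf{i}_{\mathcal{X}}\omega$ is, up to sign, the value of $\omega$ on the list $\mathcal{X}_1,\ldots,\mathcal{X}_{n+1},\mathcal{X}_\mu$ — which vanishes because $\mathcal{X}_\mu$ is repeated and $\omega$ is alternating (equivalently $\mathbf{i}_{\mathcal{X}}\mathbf{i}_{\mathcal{X}}\omega=0$ for decomposable $\mathcal{X}$) — I would get $(-1)^{n+2}\,\mathbf{i}_{\mathcal{X}_\mu}\mathbf{d}E=\beta(\mathcal{X}_\mu)$. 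It remains to show $\beta(\mathcal{X}_\mu)=0$ for all $\mu$.

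The crux is this last point. In the canonical case $S=T_pM$ forces $S^\circ=\{0\}$, so $\beta=0$ and we are done at once; indeed the whole argument then collapses to the single line $\mathcal{X}\prodint\mathbf{d}E=(-1)^{n+2}\,\mathcal{X}\prodint(\mathbf{i}_{\mathcal{X}}\Omega_M)=0$, using only that $\Omega_M$ is alternating. For a general multi-Dirac structure one must use that $\mathcal{X}$ is a decomposable, normalized partial multivector field satisfying $\mathcal{X}\in S\wedge\bigwedge^{n}(T_pM)$ and deduce that $\mathrm{span}(\mathcal{X}_1,\ldots,\mathcal{X}_{n+1})\subseteq S$; this is the field-theoretic analogue of the elementary fact that in mechanics the solution vector field of a Lagrange--Dirac system lies in the constraint distribution $\Delta_M$, which is exactly what makes $\mathbf{i}_X\mathbf{d}E=0$ there. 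Granting $\mathrm{span}(\mathcal{X})\subseteq S$, the element $\beta\in S^\circ$ annihilates each $\mathcal{X}_\mu$, so $\mathbf{i}_{\mathcal{X}_\mu}\mathbf{d}E=0$ for all $\mu$ and \eqref{econs} follows. I expect this span-containment — rather than the antisymmetry identity $\mathbf{i}_{\mathcal{X}}\mathbf{i}_{\mathcal{X}}\omega=0$, which is immediate — to be the only genuinely delicate step.
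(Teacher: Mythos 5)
Your core argument is exactly the paper's proof: for the canonical (graph) case the paper writes $\mathcal{X}\prodint\mathbf{d}E=\mathcal{X}\prodint(\mathbf{i}_{\mathcal{X}}\Omega_M)=0$ and cites Corollary~\ref{cor:prodprod}, which is precisely your ``one line'' via decomposability and antisymmetry; and your preliminary reduction of \eqref{econs} to $\left<\mathcal{X}_\mu,\mathbf{d}E\right>=0$ via Lemma~\ref{lemma:decomp} and the linear independence of the $\hat{\mathcal{X}}_\mu$ is exactly the paper's own discussion following the theorem. Where you diverge is in trying to cover a general multi-Dirac structure $D_{n+1}$ through Theorem~\ref{thm:localchar}. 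Two remarks there. First, the paper's proof does not actually do this: by substituting $\mathbf{i}_{\mathcal{X}}\Omega_M$ for $(-1)^{n+2}\mathbf{d}E$ it implicitly assumes the field equation holds on the nose, i.e.\ it only treats the case $S^\circ=\{0\}$, so your extra branch is an honest attempt at the stated generality rather than a detour. Second, the step you flag as delicate is in fact worse than delicate: for $r=n+1>1$ the membership $\mathcal{X}\in S\wedge\bigl(\mbox{$\bigwedge$}^{n}T_pM\bigr)$ of a decomposable multivector does \emph{not} imply $\mathrm{span}(\mathcal{X}_1,\ldots,\mathcal{X}_{n+1})\subseteq S$ (already $e_1\wedge e_2\in S\wedge T$ with $S=\mathrm{span}(e_1)$ is a counterexample), so $\beta(\mathcal{X}_\mu)=0$ cannot be deduced from the data you list, and that branch of your argument cannot be closed without an additional hypothesis on $\mathcal{X}$ (such as $\mathrm{span}(\mathcal{X})\subseteq S$, the direct analogue of $X\in\Delta_M$ in the mechanical and nonholonomic examples). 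Since the paper's own proof silently restricts to the graph case, your proposal matches it there and correctly isolates the issue that would arise beyond it.
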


\begin{proof}
The proof relies on lemma~\ref{lemma:decomp} for decomposable vector fields.  Using the field equations, the inner product on the left-hand side of \eqref{econs} can be written as 
\[
	\mathcal{X} \prodint \mathbf{d} E = \mathcal{X} \prodint (\mathcal{X} \intprod \Omega_{M}) = 0,
\]
according to Corollary~\ref{cor:prodprod}.
\end{proof}

To see why the previous theorem implies energy conservation, decompose the multivector field $\mathcal{X}$ as in \eqref{multvectf}.  Using lemma~\ref{lemma:decomp}, the interior product can then be written as 
\[
	\mathcal{X} \prodint \mathbf{d} E = 
	\sum_{\mu = 1}^k (-1)^{\mu + 1} \left<\mathcal{X}_\mu, 
		\mathbf{d} E \right> 
				\hat{\mathcal{X}}_\mu
\]
where $\hat{\mathcal{X}}_\mu$ is the $n$-multivector field obtained by deleting $\mathcal{X}_\mu$ from $\mathcal{X}$, i.e., 
\[
		\hat{\mathcal{X}}_\mu = \bigwedge_{\stackrel{\lambda = 1}{\lambda \ne \mu}}^{n+1} \mathcal{X}_\lambda.
\]
Since the multivector fields $\hat{\mathcal{X}}_\mu$ are linearly independent, the energy conservation equation \eqref{econs} then implies that $\left<\mathcal{X}_\mu, \mathbf{d} E \right> = 0$ for $\mu = 1, \ldots, n + 1$.  In other words, the function $E$ is constant along the integral manifolds of $\mathcal{X}$.


\section{Examples} \label{sec:examples}

In this section, we shall present some examples of Lagrange-Dirac field theories in the context of multi-Dirac structures. Namely, we show that the two examples of {\it the scalar nonlinear wave equation} and {\it electromagnetism} can be described as standard Lagrange-Dirac field theories. Furthermore, as an example of the Lagrange-Dirac field theories with nonholonomic constraints, we consider {\it time-dependent mechanical systems with affine constraints}.  Lastly, we show that the {\it Hu-Washizu variational principle} of linear elastostatics can be understood {in the context of the} Hamilton-Pontryagin principle.

\subsection{Nonlinear Wave Equations}

Let us consider the scalar nonlinear wave equation in two dimensions as discussed in Ref.~\onlinecite{Br1997}.  The configuration bundle is $\pi_{XY}: Y \rightarrow X$, with $X = \mathbb{R}^2$, so that $n = \dim X - 1 = 1$, and ${Y = X \times \mathbb{R}}$.  Local coordinates are given by $(t, x)$ for $U \subset X$ and $(t, x,\phi)$ for $Y$, and hence $(t, x ,\phi,v_{t}, v_{x})$ for $J^{1}Y$ and $(t, x,\phi,p,p^{t}, p^{x})$ for $Z \cong J^{1}Y^{\star}$.  The nonlinear wave equation is given by 
\[
\frac{\partial^{2} \phi }{\partial t^{2}} - \frac{\partial^{2} \phi }{\partial x^{2}} -V^{\prime}(\phi)=0,
\]
where $\phi$ is a section of $\pi_{XY}$ and $V : Y \rightarrow \mathbb{R}$ is a nonlinear potential.  The Lagrangian for this equation is given by
\[
L(t, x ,\phi,v_{t}, v_{x})=\frac{1}{2}\left(v_{t}^{2} -v_{x}^{2}\right) +V(\phi) 
\]
and the Lagrangian density is of course given by $\mathcal{L} = L dt \wedge dx$.  The generalized energy is then $E=p+ p^{t} v_{t} +p^{x} v_{x} - L(t, x,\phi,v_{t}, v_{x})$.

\subsubsection{Hamilton-Pontryagin Principle}  

The Hamilton-Pontryagin action functional \eqref{HPActionFunc} for the scalar field is given by
\[
S(\psi) = 
 \int_{U} \left( \frac{1}{2}\left(v_{t}^{2} -v_{x}^{2}\right) +V(\phi) + p^{t}( \phi_{,t} - v_{t}) + p^{x} (\phi_{, x} - v_{x}) \right)dt \wedge dx.
\]
Taking variations of $\phi, v_t, v_x, p^t, p^x$ (with the condition that $\delta \phi$ vanishes on the boundary of $U$), we obtain the implicit field equations
\begin{equation}\label{ImpEulLagEq-NW}
\frac{\partial \phi}{\partial t}=v_{t}, \quad 
\frac{\partial \phi}{\partial x}=v_{x},  \quad p^t=v_{t},  \quad p^{x}=-v_{x},  \quad 
\frac{\partial p^t}{\partial t}
+
\frac{\partial p^x}{\partial x}
=V^{\prime}(\phi), 
\end{equation}
and imposing the generalized energy constraints $E=0$, one has
\begin{equation}\label{Gen-energy-const}
p=L(t, x,\phi,v_{t},v_{x})-p^{t}v_{t}-p^{x}v_{x}.
\end{equation}
The equations \eqref{ImpEulLagEq-NW} are precisely the covariant equations obtained in Ref.~\onlinecite{Br1997}.

\subsubsection{Multi-Dirac Structures}

The canonical pre-multisymplectic 3-form on $M$ is given by
\[
\Omega_{M}=-dp^{t}\wedge d\phi \wedge dx+dp^{x}\wedge d\phi \wedge dt - dp \wedge dt \wedge dx,
\]
and one can define a multi-Dirac structure $D= D_{1} \oplus D_{2}$ of degree $2$ associated with $\Omega_M$ on $M$ by the standard construction, where we set $n=1$.  We omit the definition of $D_1$, since only $D_2$ is important for the formulation of the dynamics:
\[
D_{2}=\left\{ (\mathcal{X},\Sigma) \in P_{2}= \mbox{$\bigwedge$}^2(TM) \oplus T^\ast M \mid \mathbf{i}_{\mathcal{X}}\Omega_{M}=\Sigma \right\}.
\]
Here, $\mathcal{X}$ is a partial multivector field locally given by 
\[
\mathcal{X} 
	= \bigwedge_{\mu = 0}^{1}
		\left( \frac{\partial}{\partial x^\mu}
			+ C_\mu \frac{\partial}{\partial \phi}
			+ C_{\mu}^\nu \frac{\partial}{\partial p^\nu}			
			+ c_\mu \frac{\partial}{\partial p} \right),
\]
where the variables $x^\mu$, $\mu = 0, 1$, are given by $x^0 = t$ and $x^1 = x$.  The field equations can now be expressed as 
\[
{
(\mathcal{X},-\mathbf{d}E) \in D_{2},
}
\]
or in other words $\mathbf{i}_{\mathcal{X}} \Omega_{M} =- \mathbf{d}E$.  In coordinates, these equations become
\[
\begin{split}
&v_{0}=C_{0}, \qquad v_{1}=C_{1},\qquad C^{0}_{0}+C^{1}_{1}=\frac{\partial L}{\partial \phi}, \qquad p^{0}=\frac{\partial L}{\partial v_{0}},\qquad p^{1}=\frac{\partial L}{\partial v_{1}}, \\
&c_{0}+C^{1}_{0}C_{1}-C_{0}C^{1}_{0}=\frac{\partial L}{\partial x}^{0}, \qquad c_{1}+C^{0}_{0}C_{1}-C_{0}C^{0}_{1}=\frac{\partial L}{\partial x^{1}},
\end{split}
\]
where
\[
C_{\mu}=\frac{\partial \phi}{\partial x^{\mu}}, \qquad C^{\nu}_{\mu}=\frac{\partial p^{\nu}}{\partial x^{\mu}}, \qquad c_{\mu}=\frac{\partial p}{\partial x^{\mu}}.
\]
Hence, by simple rearrangements, we obtain the implicit Euler-Lagrange equation \eqref{ImpEulLagEq-NW} for the nonlinear wave equation, together with
\[
\frac{\partial p}{\partial x^{\mu}}=\frac{\partial }{\partial x^{\mu}} (L(x^{0},x^{1},\phi,v_{0},v_{1})-p^{0}v_{0}-p^{1}v_{1}).
\]
By imposing the generalized energy constraint $E=0$, we again obtain \eqref{Gen-energy-const}.

\subsection{Electromagnetism}

The multisymplectic description of electromagnetism can be found, among others, in Ref.~\onlinecite{GoIsMa1997}.  Here, we highlight the role of the Hamilton-Pontryagin variational principle and the associated multi-Dirac structure.  The electromagnetic potential $A = A_\mu dx^\mu$ is a section of the bundle $Y = T^\ast X$ of one-forms on spacetime $X$.  For the sake of simplicity, we let $X$ be $\mathbb{R}^4$ with the Minkowski metric, but curved spacetimes can be treated equally well.  The bundle $Y$ has coordinates $(x^\mu, A_\mu)$ while $J^1 Y$ has coordinates $(x^\mu, A_\mu, A_{\mu, \nu})$.  The electromagnetic Lagrangian density is given by 
\[
	\mathcal{L}(A, j^1 A) = -\frac{1}{4} \mathbf{d} A \wedge \ast \mathbf{d} A = -\frac{1}{4} F_{\mu\nu} F^{\mu\nu},
\]
where the Hodge star operator $\ast$ is associated to the Minkowski metric and where $F_{\mu\nu}$ is the anti-symmetrization $F_{\mu\nu} = A_{\mu, \nu} - A_{\nu, \mu}$.

\subsubsection{Hamilton-Pontryagin Principle}

The Hamilton-Pontryagin action principle is given in coordinates by 
\[
	S = \int_U \left[
	p^{\mu, \nu} \left( \frac{\partial A_\mu}{\partial x^\nu} - A_{\mu, \nu} \right)
	- \frac{1}{4} F_{\mu\nu} F^{\mu\nu}  \right] d^4x,
\]
where $U$ is an open subset of $X$.  The Hamilton-Pontryagin equations are given by
\begin{equation} \label{HPmaxwell}
	p^{\mu,\nu} = F^{\mu\nu}, 
	\quad A_{\mu, \nu} = \frac{\partial A_\mu}{\partial x^\nu}, 
	\quad \frac{d p^{\mu, \nu}}{d x^\nu} = 0,
\end{equation}
and by eliminating $p^{\mu, \nu}$ lead to Maxwell's equations: $\partial_\nu F^{\mu \nu} = 0$.

\subsubsection{Multi-Dirac Structures} 

The canonical multisymplectic form for electromagnetism is given in coordinates by 
\[
	\Omega_M = d A_\mu \wedge d p^{\mu, \nu} \wedge d^n x_\nu 
		- dp \wedge d^{n + 1} x.
\]
Let $D = D_1\oplus \cdots\oplus D_{4}$ be the multi-Dirac structure induced by this form and consider in particular the component $D_{4}$ of the highest degree. In terms of this bundle, the implicit Euler-Lagrange equations can be described by 
\[
	(\mathcal{X},-\mathbf{d} E) \in D_{4}, 
\]
where $\mathcal{X}$ is a partial vector field of degree $4$ given locally by 
\[
	\mathcal{X} = \bigwedge_{\mu = 1}^{4} 
		\left(
			\frac{\partial}{\partial x^\mu}
			+ C_{\nu\mu} \frac{\partial}{\partial A_\nu}
			+ C_{\mu}^{\kappa\nu} \frac{\partial}{\partial 
				p^{\kappa,\nu}}			
			+ C_\mu \frac{\partial}{\partial p}
		\right) 
\]
and $E$ is the generalized energy density given by 
\[
	E(x^\mu, A_\mu, A_{\mu, \nu}, p^{\mu, \nu}, p) = p + p^{\mu, \nu} A_{\mu, \nu} 
		+ \frac{1}{4} F_{\mu\nu} F^{\mu\nu}.
\]
As before, a coordinate computation then shows that the coefficients of $\mathcal{X}$ are given by 
\[
	C_{\mu \nu} = A_{\mu, \nu},
	\quad
	C_\nu^{\mu\nu} = \frac{\partial L}{\partial A_\mu} = 0,
	\quad 
	p^{\mu,\nu} = \frac{\partial L}{\partial A_{\mu, \nu}} = 
	F^{\mu\nu}
\]
and 
\[
	C_\mu + C_\mu^{\kappa \lambda} C_{\kappa \lambda} 
		- C_{\nu\mu}C^{\nu\lambda}_\lambda =
	\frac{\partial L}{\partial x^\mu} = 0,
\]
where
\[
	C_{\nu\mu} = \frac{\partial A_{\nu}}{\partial x^\mu}, 
	\quad 
	C^{\kappa\lambda}_\mu = 
		\frac{\partial p^{\kappa, \lambda} }{\partial x^\mu}, 
	\quad \text{and} \quad
	C_\mu = \frac{\partial p}{\partial x^\mu}.
\]

A similar computation as in the proof of Theorem~\ref{thm:implicitsympl} shows that these equations are equivalent to Maxwell's equations \eqref{HPmaxwell} in implicit form, together with the energy constraint \eqref{encons-LDS}.

\subsection{Time-Dependent Mechanical Systems with Affine Constraints}

We now show how time-dependent mechanics with linear nonholonomic constraints can be interpreted as a special case of Lagrange-Dirac field theories.  Let $Q$ be the configuration space of a mechanical system and consider the bundle $\pi_{XY}: Y \to X$, where $X = \mathbb{R}$ (coordinatized by the time $t$) and $Y = \mathbb{R} \times Q$, with coordinates $(t, q^i)$.  The first jet bundle is given  by $J^{1}Y \cong \mathbb{R} \times TQ$, with  local coordinates $(t,q^{i},v^{i})$.  Let $\mathcal{L}:J^{1}Y \to \Lambda^{1}(X)$ be a Lagrangian density and put $\mathcal{L(\gamma)}=L(t,q^{i},v^{i})dt$.  The dual jet bundle $Z$ is a bundle over $Y$ with local coordinates $(t,q^{i},p_{t},p_{i})$ and can be identified with $T^{\ast}Y \cong T^{\ast}\mathbb{R} \times T^{\ast}Q$.  The canonical one-form on $Z$ is given by $\Theta=p_{i}dq^{i} +p_{t}dt$ and  the canonical two-form is given by $\Omega=-\mathbf{d}\theta=dq^{i} \wedge dp_{i}-dp_{t} \wedge dt$.

\subsubsection{Non-Holonomic Affine Constraints}

Consider now a distribution $\Delta$ on $Y$ which is \emph{weakly horizontal}, i.e., there exists a vertical bundle $W \subset VY$ such that $\Delta \oplus W = TY$ (see Ref.~\onlinecite{Kr2005}).  In local coordinates, let 
\begin{equation} \label{annihilator}
	\varphi^\alpha := A^\alpha_i(t, q)   dq^i +   B^\alpha(t, q) dt 
\end{equation}
with $\alpha = 1, \ldots, k$, be a basis for the annihilator $\Delta^\circ$.  The distribution $\Delta$ gives rise to an affine subbundle of $J^1 Y$, denoted by $\mathcal{C}$ and given in coordinates by 
\[
	\mathcal{C} = \{ (t, q^i, v^i) \in J^1 Y \mid
		A^\alpha_i(t, q) v^i + B^\alpha(t, q) = 0 \}.
\]
Intrinsically, $\mathcal{C}$ can be defined as follows.  Recall that a jet $\gamma \in J^1 Y$ can be viewed as a map from $T_t \mathbb{R}$ to $T_{(t, q)} Y$ which is locally of the form $\gamma = dt \otimes ( \partial/\partial t + v^i \partial/\partial q^i)$ (see \eqref{jetmap}).  With this interpretation, $\mathcal{C}$ is the set of all $\gamma \in J^1 Y$ such that $\mathrm{Im} \, \gamma \subset \Delta$.

The distribution $\Delta$ can now be lifted to the Pontryagin bundle $M=J^{1}Y \oplus Z$ as follows.  First, let $\pi_{Y, J^1 Y} : J^1 Y \to Y$ be the canonical jet bundle projection, given in coordinates by $\pi_{Y, J^1 Y}(t, q^i, v^i) = (t, q^i)$, and define now
\[
	\Delta_{J^1 Y} = T \pi_{Y, J^1 Y}^{-1}(\Delta)  \subset T(J^{1}Y).
\]
This is a distribution defined on the whole of $J^1 Y$, but only its values along $\mathcal{C}$ will be needed.  We therefore denote the restriction of this bundle to $\mathcal{C}$ by $\Delta_{\mathcal{C}}$.  If the annihilator of $\Delta$ is given by \eqref{annihilator}, then an easy coordinate computation shows that $\Delta_{\mathcal{C}}^\circ$ is spanned by the $k$ contact forms
\begin{equation}\label{OneFormZ}
\tilde{\varphi}^\alpha :=A_{i}^{\alpha}(t,q)(dq^{i} -v^{i}dt),
\end{equation}
with $\alpha = 1, \ldots, k$.  Finally, we let $\Delta_M$ be the distribution defined along points of $\mathcal{C} \times J^1 Y^\star \subset M$, given by $\Delta_M := T\pi_{J^{1}Y,M}^{-1}(\Delta_{\mathcal{C}})$, where $\pi_{J^{1}Y,M}: M \to J^{1}Y$.   The annihilator $\Delta_M^\circ$ is spanned by the pullback to $M$ of the one-forms $\tilde{\varphi}^\alpha$ in \eqref{OneFormZ}.  This is but one way of constructing $\Delta_{\mathcal{C}}$; other approaches (which, unlike our construction, can be used for nonlinear constraints as well) can be found in Refs.~\onlinecite{LeMa1996, CaLeMa1999}, and the references therein.

\subsubsection{Dirac Structures}

We now construct a Dirac structure on the Pontryagin bundle $M$ as follows: 
\[
\begin{split}
D_{M}=\left\{ (\mathcal{X}, \Sigma) \in TM \oplus T^{\ast}M  \mid  \; \Sigma-\mathbf{i}_{\mathcal{X}} \Omega_{M} \in \Delta_{M}^{\circ} \quad  \textrm{for} \quad  
\mathcal{X} \in \Delta_{M} \right\},
\end{split}
\]
where $\Omega_{M}:=\pi_{ZM}^{\ast}\Omega$.    It can be shown directly that $D_M$ is a Dirac structure (see Ref.~\onlinecite{YoMa2006a}); alternatively, note that $D_M$ is of the form described in the theorem~\ref{thm:localchar} and hence automatically determines a (multi-)Dirac structure.

Now, let  ${\mathcal{X}}: M \to TZ$ be a partial vector field as in \eqref{multvectf}.  In local coordinates, we have 
\[
	{\mathcal{X}} =
		\frac{\partial}{\partial t}
			+ \dot{q}^i \frac{\partial}{\partial q^{i}}
			+ \dot{p}_{i} \frac{\partial}{\partial p_i}
			+ \dot{p_{t}} \frac{\partial}{\partial p_{t}}.
\]
On the other hand, the generalized energy $\mathcal{E}$ is given by $\mathcal{E} = E dt$, where
\[
E(t,q^{i},v^{i},p_{t},p_{i})=p_{t}+p_{i}v^{i}-L(t,q^{i},v^{i})
\]
and the differential of $E$ is locally given by 
\[
\mathbf{d}E(t,q^{i},v^{i},p_{t},p_{i})=\left( -\frac{\partial{L}}{\partial{t}}\right)dt+\left( -\frac{\partial{L}}{\partial{q^{i}}}\right)dq^{i}+\left(p_{i}-\frac{\partial{L}}{\partial{v^{i}}}\right) dv^{i}+dp_{t}+v^{i}dp_{i}.
\]

Thus, the Lagrange-Dirac equations for nonholonomic time-dependent mechanics are given by the triple $(\mathcal{X},E, D)$ which satisfies the condition 
\begin{equation} \label{implicitnheq} 
(\mathcal{X}, \mathbf{d}E) \in D.
\end{equation}
This is in turn equivalent to the intrinsic nonholonomic Lagrange-Dirac equations:
\[
\mathbf{i}_{\mathcal{X}}\Omega-\mathbf{d}E \in \Delta_{M}^{\circ}
\quad \text{and} \quad
\mathcal{X} \in \Delta_M.
\] 
In coordinates, one obtains
\begin{equation} \label{nheqns}
 \dot{q}^{i}=v^{i}, \quad \dot{p_{i}}= \frac{\partial L}{\partial q^{i}} +\lambda_{r}A^{r}_{i},
\quad p_{i} = \frac{\partial L}{\partial v^{i}},
\end{equation}
together with the affine constraints $A^{r}_{i}(t,q)v^{i}+B^{i}(t,q)=0$ 
and the energy conservation law
\[
\frac{d}{dt}\left( p_{t}+p_{i}v^{i}-L(t,q^{i},v^{i}) \right)=0.
\]
By imposing the constraint that $E = 0$, we find that $p_{t}=L-p_{i}v^{i}$.  Note that the equations of motion \eqref{nheqns} are the analogues of the Dirac equations \eqref{implhol} for affine constraints.

One can apply a similar approach to construct multi-Dirac structures for classical field theories with nonholonomic constraints.  After constructing the distribution $\Delta_M$ in a suitable way (see e.g. Ref.~\onlinecite{VaCaLeMa2005}), the nonholonomic multi-Dirac structure will then be given again by theorem~\ref{thm:localchar} and the nonholonomic field equations will be given by \eqref{implicitnheq}. At this point, we emphasize that in the current description of nonholonomic multi-Dirac structures the constraints are assumed to be {\it linear/affine} in the multi-velocities, while $\Delta_M$ can be constructed for nonlinear constraints as well.  Most physically relevant nonholonomic field theories, however, have nonlinear constraints, and the study of nonlinear nonholonomic field theories using multi-Dirac structures will be the subject of future work.

\subsection{Elastostatics}

To conclude, we show how the {\bfi Hu-Washizu principle} from linear elastostatics (see Ref.~\onlinecite{Wa1968}) can be understood as a Hamilton-Pontryagin variational principle.  Let $\mathcal{B}$ be an open subset of $\mathbb{R}^3$ representing an elastic medium and consider the bundle $\pi: Y := \mathcal{B} \times \mathbb{R}^3 \rightarrow \mathcal{B}$.  We denote coordinates on $J^1 Y$ by $(X^I, \phi^i, F^i_I)$, where the fiber coordinates $F^i_I$ represent the {\bfi deformation gradient} of the material.  The dual bundle $Z$ has coordinates $(X^I, \phi^i, p, \pi_i^I)$, where $\pi_i^I$ will be seen to represent the first {\bfi Piola-Kirchhoff tensor}.  

Furthermore, we let $W : J^1 Y \rightarrow \mathbb{R}$ be the {\bfi stored-energy function} of the material, and we assume that $W$ is a function of $F^i_I$ only (this assumption can easily be relaxed).  We denote the  {external body} forces acting on the medium by $b_i(X)$ and we assume that there are traction boundary conditions acting on the boundary $\partial \mathcal{B}$, given by $\pi^I_i n_I = \tau_i$, where $n_I$ is the unit normal to the boundary and $\tau_i(X)$ is a prescribed traction vector.  Dirichlet boundary conditions can easily be incorporated by restricting the space of fields as in Ref.~\onlinecite{MaHu1994}, though we will not do so here.

\subsubsection{Hamilton-Pontryagin Principle}

Taking $W$ as our Lagrangian, the Hamilton-Pontryagin principle becomes 
\begin{equation} \label{HWHP1}
S(\phi, F, \pi) = \int_{\mathcal{B}} \left[\pi^I_i  \left(\frac{\partial \phi^i}{\partial X^I} - F^i_I \right) + W(F) - b_i(X) \phi^i(X) \right]  dV
 - \int_{\partial \mathcal{B}} \tau_i(X) \phi^i(X) dS,
\end{equation}
where we have incorporated the traction boundary conditions through the surface integral.  Due to the presence of the boundary integral, this action functional is more general than the Hamilton-Pontryagin principle that we dealt with in Section~\ref{sec:hpprinciple}, but for the purpose of taking variations it can be treated in exactly  the same way.  In Section~\ref{sec:HPbound}, we will formulate a different way of incorporating the boundary conditions, which can easily be generalized to the case of elastostatics on a nonlinear manifold.

Taking variations of \eqref{HWHP1} with respect to $\phi$, $F$, and $\pi$ results in 
\begin{align*}
	\delta S & =  \int_{\mathcal{B}} \left[  \left(\frac{\partial \phi^i}{\partial X^I} - F^i_I \right) \delta \pi^I_i + 
		\left(\frac{\partial W}{\partial F^i_I} - \pi^I_i \right) \delta F^i_I - \left(\frac{\partial \pi^I_i}{\partial X^I} + b_i \right) \delta \phi^i \right] dV \\
		& \qquad + \int_{\partial \mathcal{B}} \left( \pi^I_i n_I - \tau_i(X) \right) \delta \phi^i_{|\partial \mathcal{B}} dS,
\end{align*}
which leads to the following equations: 
\begin{equation} \label{fieldeqs}
	\frac{\partial \pi^I_i}{\partial X^I} + b_i = 0, \quad 
	F^i_I = \frac{\partial \phi^i}{\partial X^I}, 
	\quad \text{and} \quad \pi^I_i = \frac{\partial W}{\partial F^i_I},
\end{equation}
together with the boundary conditions 
\begin{equation} \label{boundary_conditions}
	\pi^I_i n_I = \tau_i  \quad \text{(on $\partial \mathcal{B}$)},
\end{equation}
where $n_I$ is the outward normal vector to $\partial \mathcal{B}$.	The last equation in \eqref{fieldeqs} shows that $\pi^I_i$ can be identified with the first  Piola-Kirchhoff tensor, the second equation is nothing but the definition of the deformation gradient, while the first equation is the Navier-Cauchy equation.

\subsubsection{Hu-Washizu Principle}

The variational principle \eqref{HWHP1} is not yet the Hu-Washizu variational principle.  To obtain the latter,  we work in the context of linear elasticity.  Write the elastic field $\phi^i$ as $\phi^i(x) = x^i + u^i(x)$, where $\mathbf{u} = (u^1, u^2, u^3)$ is the {\bfi displacement field}, and where material and spatial points are both denoted by $x^i$. For the stored energy function, we take a quadratic expression of the form 
\[
W(e) = \frac{1}{2}c_{ik}^{jl} e^i_j e^k_l \quad \text{with} \quad
e^i_j = \frac{1}{2} ( u^i_{, j} + u^j_{, i}),
\]
where $e^i_j$ {are} referred to as the {\bfi strain tensor}, while the {\bfi constitutive tensor} $c_{ik}^{jl}$ characterizes the elastic medium.  We now substitute these expressions into \eqref{HWHP1}.  Up to a global sign and ignoring boundary conditions and external forces for the sake of simplicity, we obtain the Hu-Washizu principle for linear elastostatics:
\[
	S(u, e, p) = \int_{\mathcal{B}} \left[ p_i^j \left( e^i_j - \frac{1}{2} ( u^i_{, j} + u^j_{, i}) \right) + W(e) \right] dV.
\]
Here, we have used the fact that the stress tensor $p_i^j=\partial{W(e)}/\partial{e^{i}_{j}}$ is symmetric in both indices.

It goes without saying that this derivation relies heavily on the Euclidian structure of the space in which the body moves.  Nevertheless, it is possible to derive a Hu-Washizu principle for \emph{nonlinear} elasticity in terms of the {\bfi Green tensor} which is valid on arbitrary Riemannian manifolds.  This modified principle can be obtained from the Hamilton-Pontryagin principle by means of symmetry reduction and will be the subject of a forthcoming paper.

\subsubsection{The Hamilton-Pontryagin Principle with Boundary Conditions}
\label{sec:HPbound}

To finish, we wish to point out that the boundary term in \eqref{HWHP1} only makes sense when the space in which the medium moves is Euclidian.  When this is not the case, the boundary conditions \eqref{boundary_conditions} can be incorporated in a different, more intrinsic way which is reminiscent of the Lagrange-d'Alembert principle in mechanics (see for instance Ref.~\onlinecite{La1962}).  We first let $S$ be the action integral \eqref{HWHP1}, but without the boundary term:
\begin{equation} \label{SHP}
S(\phi, F, \pi) = \int_{\mathcal{B}} \left[\pi^I_i  \left(\frac{\partial \phi^i}{\partial X^I} - F^i_I \right) + W(F) \right]  dV,
\end{equation}
where for the sake of convenience we have also set the external body forces $b_i(X)$ equal to zero.  We require now that the physical fields $(\phi, F, \pi)$ are those for which $S$ satisfies 
\begin{equation} \label{LDAprinciple}
	\delta S(\phi, F, \pi) = \int_{\partial \mathcal{B}} \tau_i(X) \, \delta \phi^i(X) dS,
\end{equation}
for arbitrary variations $(\delta \phi, \delta F, \delta \pi)$.  This variational principle reproduces the field equations \eqref{fieldeqs} and the boundary conditions \eqref{boundary_conditions}, but has the advantage of being closer in spirit to the Hamilton-Pontryagin principle, and of being more intrinsic: when the elastic fields take value in a manifold $Q$, the boundary term in \eqref{HWHP1} is only locally defined. By contrast, \eqref{LDAprinciple} holds globally true even when the target space is a manifold: the variations $\delta \phi$ are then elements of $TQ$, while the traction vector $\tau$ can be seen as an element of $T^\ast Q$, and the pairing between both in \eqref{LDAprinciple} is the standard duality pairing.

We now show how the field equations \eqref{fieldeqs}, together with the boundary conditions \eqref{boundary_conditions}, can be expressed intrinsically in terms of the multisymplectic form as in Theorem~\ref{thm:field_equations}.  To this end, let $\beta$ be a map from the boundary $\partial \mathcal{B}$ to the dual jet bundle $Z$, given in coordinates by 
\[
	\beta = B^I_i d \phi^i \wedge d^n X_I + B d^{n+1}X,
\]
which will encode the boundary conditions, in a form to be determined later.  We recall that in this example the Pontryagin bundle $M$ has coordinates $(X^I, \phi^i, F^i_I, p, \pi^I_i)$, and we denote a generic section of $M$ by $\psi$.

With $S$ given by \eqref{SHP}, consider now the modified Hamilton-Pontryagin principle 
\[
	\frac{d}{d\lambda} S(\psi_\lambda) \Big|_{\lambda = 0} = \int_{\partial \mathcal{B}} \psi^\ast (\mathbf{i}_{\mathcal{V}_M} \beta) 
\]
for any finite variation $\psi_\lambda$ of $\psi$, with infinitesimal variation given by the vertical vector field $\mathcal{V}_M$ (see Section~\ref{sec:hp_principle}, where this notation was introduced).  Following the proof of Proposition~\ref{prop:implEL}, but keeping the boundary terms, we see that this variational principle is equivalent to 
\[
	- \int_{\mathcal{B}} \psi^\ast ( \mathbf{i}_{\mathcal{V}_M}  \Omega_{\mathcal{E}} ) + 
		\int_{\partial \mathcal{B}} \psi^\ast (\mathbf{i}_{\mathcal{V}_M}( \Theta_M - \mathcal{E})) = 
		\int_{\partial \mathcal{B}} \psi^\ast (\mathbf{i}_{\mathcal{V}_M} \beta),
\]
leading to the field equations 
\[
	\psi^\ast ( \mathbf{i}_{\mathcal{V}_M}  \Omega_{\mathcal{E}} ) = 0
\]
and the boundary conditions
\begin{equation} \label{BCgeom}
	\psi^\ast (\mathbf{i}_{\mathcal{V}_M}( \Theta_M - \mathcal{E} - \beta)) = 0 \quad \text{(on $\partial \mathcal{B}$)},
\end{equation}
for all vertical vector fields $\mathcal{V}_M$.  To see that these expressions reduce to \eqref{fieldeqs} and \eqref{boundary_conditions}, observe that $\Theta_M$ and $\mathcal{E}$ are given in the case of elastostatics by 
\[
	\Theta_M = \pi_i^I d\phi^i \wedge d^n X_I + p d^{n+1}X,
		\quad \text{and} \quad 
	\mathcal{E} = (\pi_i^I F^i_I + p - W) d^{n + 1} X.
\]
It then follows easily that the field equations are given by \eqref{fieldeqs}.  For the boundary conditions, note that the vertical vector fields are spanned by $\{ \partial/\partial \phi^i, \partial/\partial F^i_I, \partial/\partial \pi^I_i, \partial/\partial p\}$.  When evaluated on these basis elements,  \eqref{BCgeom} vanishes unless $\mathcal{V}_M = \partial/\partial \phi^i$.  In that case, we have 
\[
	0 = \psi^\ast (\mathbf{i}_{\partial/\partial \phi^i}( \Theta_M - \mathcal{E} - \beta)) = \psi^\ast ( (\pi_i^I - B_i^I) d^n X_I ).
\]
If $\mathcal{B}$ is equipped with a Riemannian metric $g_{IJ}$, the last expression can be rewritten by noting that $\psi^\ast (d^n X_I ) = g^{-1/2} n_I dS$, where $n_I$ is the unit normal vector, $g$ is the determinant of the metric, and $dS$ is the induced volume form on $\partial \mathcal{B}$.  The boundary conditions \eqref{BCgeom} then become $\pi^I_i n_I = B^I_i n_I$, which coincides with \eqref{boundary_conditions} if we choose $\beta$ so that $B^I_i n_I = \tau_i$.


\section{Conclusions and Future Work}

In this paper, we have developed the Hamilton-Pontryagin principle for field theories and we have studied the relation between the field equations obtained from the Hamilton-Pontryagin principle and Lagrange-Dirac field theories, formulated in terms of multi-Dirac structures.  These ideas are shown to be a natural extension of Dirac structures and Lagrange-Dirac dynamical systems in mechanics.  Finally, we have given a few illustrative examples of Lagrange-Dirac field theories: the Klein-Gordon equation, Maxwell's equations, the case of time-dependent mechanics with affine nonholonomic constraints, and elastostatics.

\paragraph*{Multi-Dirac Structures for Field Theories with Boundary Conditions.}

We have shown in this paper how the Hamilton-Pontryagin principle can be augmented in order to deal with field theories with non-trivial (Neumann) boundary conditions.   We are currently working on an extended concept of multi-Dirac structure, in which the boundary conditions are automatically incorporated.  As we discuss below, we expect that our approach will be related to the \emph{Stokes-Dirac structures} of van der Schaft and Maschke (Ref.~\onlinecite{ScMa2002}) used in boundary control theory. 

\paragraph*{Space-Time Slicings and the Adjoint Formalism.}

The approach in this paper are \emph{covariant} in the sense that no distinction is made between time and the spatial variables on the base space.  In fact, as we have pointed out before, the configuration bundle $\pi_{XY}: Y \to X$ can be arbitrary and in particular $X$ does not have to represent spacetime.  While this approach has various advantages (see Ref.~\onlinecite{KiTu1979, Br1997, GoIsMa1997} for an overview), not in the least that the resulting structures are all finite-dimensional, it is often necessary to single out one coordinate direction as time, for instance when considering the initial-value problem for field theories. 

In related work (see Ref.~\onlinecite{CaVaGoMaYo2011}) we will show that when such a space-time slicing is chosen (as in Ref.~\onlinecite{GoIsMa1999, BiSnFi1988}), the Hamilton-Pontryagin variational principle yields the field equations in \emph{adjoint form}, bringing out the distinction between the \emph{dynamical} variables of the theory (the physical fields) and the \emph{kinematic} variables, which are present due to gauge freedom.   For instance, in the case of relativity, the former are the metric and the extrinsic curvature on a spatial hypersurface, the latter are the lapse and shift, and the adjoint formalism reduces to the familiar ADM equations of relativity.  

Furthermore, the choice of a space-time splitting induces a correspondence between multi-Dirac structures on the jet bundle and certain infinite-dimensional Dirac structures on the space of all fields.  The latter are in turn related to the concept of \emph{Stokes-Dirac structures} (Ref.~\onlinecite{ScMa2002}).  The precise link between Stokes-Dirac structures and infinite-dimensional Dirac structures involves Poisson reduction and was established in Ref.~\onlinecite{VaYoLeMa2010}.

\paragraph*{Discretizations of Multi-Dirac Structures and Field Theories.}

Leok and Ohsawa (Ref.~\onlinecite{LeOh2011}) recently proposed a concept of \emph{discrete Dirac structure} in mechanics.  At the same time, it has been shown in Ref.~\onlinecite{BoMa2009} that discrete versions of the Hamilton-Pontryagin principle can be used to derive very accurate geometric integrators for mechanical systems.  We propose to explore the field-theoretic version of these discretizations. Since multi-Dirac structures and variational principles are a natural way to describe field theories even for degenerate Lagrangians, this approach is especially promising.   In particular, we will focus on field theories whose covariance group includes the diffeomorphism group of the base space $X$, focusing on the link with the adjoint formalism.

\paragraph*{Symmetry and Reduction of Multi-Dirac Structures.}  

The presence of non-trivial symmetry groups is an essential feature of many classical field theories, such as electromagnetism, Yang-Mills theories, and elasticity.  More complex examples include the theory of perfect fluids and Einstein's theory of relativity.  It is often useful to obtain a form of the field equations from which the symmetry has been factored out; this is for instance what happens in writing Maxwell's equations in terms of the field strength $F_{\mu\nu}$ rather than the vector potential $A_\mu$.  From a bundle-theoretic point of view, various approaches of symmetry reduction for field theories have been proposed (see Refs.~\onlinecite{CaMa2003, CaRa2003, GaRa2010} and the references therein).  We propose to develop a reduction theory for multi-Dirac structures along the lines of the theory of symmetry reduction for  Dirac structures in classical mechanics developed by Refs.~\onlinecite{BlRa2004, YoMa2007, YoMa2009}.  

\begin{acknowledgments}
This paper could not have been written if it wasn't for the friendly guidance and support of Jerry Marsden.  We had many scientific discussions with Jerry when we visited Caltech, but unfortunately he passed away before he could see the result of our work.  We therefore would like to dedicate this paper to his memory.

We are very grateful to Frans Cantrijn, Marco Castrill{\'o}n-L{\'o}pez, Mark Gotay, Juan-Carlos Marrero, David Mart\'{\i}n de Diego, Tomoki Ohsawa, Chris Rogers and Marco Zambon for useful discussions.   In particular, we thank Marco Zambon for pointing out a flaw in an earlier version and an anonymous referee for suggestions to improve this paper.

J. V. is on leave from a Postdoctoral Fellowship of the Research Foundation--Flanders (FWO-Vlaanderen) and would like to thank JSPS for financial support during a research visit to Waseda University, where part of this work was carried out.  The research of H. Y. is partially supported by JSPS Grant-in-Aid 23560269, JST-CREST and Waseda University Grant for SR 2012A-602.  M. L. is partially supported by NSF grants DMS-1010687, and DMS-1065972.  This work is supported by the {\sc irses} project {\sc geomech} (nr.\ 246981) within the 7th European Community Framework Programme

\end{acknowledgments}


\appendix
\section{Generalities about Multi-Vector Fields} \label{sec:multivector}

\subsection{Definitions}

Let $M$ be a manifold of dimension $n$.  A $k$-multivector field (where $k \le n$) on $M$ is a section $\mathcal{X}$ of the $k$-fold exterior power $\bigwedge^k(TM)$ of the tangent bundle.  The module of all $k$-multivector fields is denoted by $\mathfrak{X}^k(M)$.

We define the {\bfi interior product} of multivector fields and forms as follows.  Let $\mathcal{X} \in \mathfrak{X}^k(M)$ and $\alpha \in \Omega^l(M)$, where $k \le l$.  The {\bfi left interior product} $\mathcal{X} \intprod \alpha$ is then the unique $(l-k)$-form such that
\[
	(\mathcal{X} \intprod \alpha)(\mathcal{X}') = \alpha(\mathcal{X} \wedge \mathcal{X}')
\]
for all $(l - k)$-multivectors $\mathcal{X}'$.  Similarly, the {\bfi right interior product} of a multivector $\mathcal{X} \in \mathfrak{X}^k(M)$ and a form $\beta \in \Omega^m(M)$, with $k \ge m$, is the unique $(k-m)$-multivector $\mathcal{X} \prodint \beta$ with the property that 
\[
	(\mathcal{X} \prodint \beta) \intprod \gamma = (\beta \wedge \gamma)(\mathcal{X})
\]
for all $(k-m)$-forms $\gamma$ on $M$.

Further information about multivectors and interior products can be found in (for instance) Ref.~\onlinecite{Tu1974, Ma1997}.  From these papers, we quote the following two results on the relation between the interior product and the wedge product of vector fields and forms, respectively.

\begin{lemma}[Ref.~\onlinecite{Tu1974}] \label{lemma:wedge}
Let $\mathcal{X} \in \mathfrak{X}^k(M)$ and $\alpha \in \Omega^1(M)$.  Then 
\[
	(\mathcal{X} \wedge \mathcal{X}') \prodint \alpha = 
		(\mathcal{X} \prodint \alpha) \wedge \mathcal{X}' + 
		(-1)^k \mathcal{X} \wedge (\mathcal{X}' \prodint \alpha)
\]
for any multivector $\mathcal{X}'$, and 
\[
	\mathcal{X} \intprod (\alpha \wedge \beta) = (\mathcal{X} \prodint \alpha) \intprod \beta
		+ (-1)^k \alpha \wedge (\mathcal{X} \intprod \beta)
\]
for any form $\beta$.
\end{lemma}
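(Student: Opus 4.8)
The plan is to reduce both identities to \emph{decomposable} multivector fields $\mathcal{X}$, $\mathcal{X}'$ and, for the second identity, to forms that are products of one-forms; since every operation involved is $\mathbb{R}$-linear in each argument and local, this reduction is harmless. The natural strategy is then an induction on the degree $k$ of $\mathcal{X}$.

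For the first identity, the base case $k = 1$ amounts to the elementary facts that $v \prodint \alpha = \alpha(v)$ for a vector field $v$ and a one-form $\alpha$, and that $(v \wedge \mathcal{X}') \prodint \alpha = \alpha(v)\, \mathcal{X}' - v \wedge (\mathcal{X}' \prodint \alpha)$. Both are checked directly from the defining property $(\mathcal{Y} \prodint \alpha) \intprod \gamma = (\alpha \wedge \gamma)(\mathcal{Y})$, evaluating on an arbitrary form $\gamma$ and using that $\mathbf{i}_v$ is an anti-derivation, $\mathbf{i}_v(\alpha \wedge \gamma) = \alpha(v)\, \gamma - \alpha \wedge \mathbf{i}_v \gamma$. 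For the inductive step I would write $\mathcal{X} = v \wedge \mathcal{X}_0$ with $\deg \mathcal{X}_0 = k - 1$, so that $\mathcal{X} \wedge \mathcal{X}' = v \wedge (\mathcal{X}_0 \wedge \mathcal{X}')$, apply the base case with $\mathcal{X}_0 \wedge \mathcal{X}'$ in place of $\mathcal{X}'$, then apply the induction hypothesis to $(\mathcal{X}_0 \wedge \mathcal{X}') \prodint \alpha$, and finally regroup, using the base case once more in reverse to recombine $\alpha(v)\, \mathcal{X}_0 - v \wedge (\mathcal{X}_0 \prodint \alpha)$ into $\mathcal{X} \prodint \alpha$. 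The sign $(-1)^k$ appears because $\alpha$ must be commuted past the degree-$k$ factor $\mathcal{X}$ before it can act on $\mathcal{X}'$; each step of the induction contributes one sign flip from the leading factor $v$, and these assemble into $(-1)^{1 + (k - 1)} = (-1)^k$.

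For the second identity I would deduce it from the first together with the defining universal properties in the appendix, rather than run a separate induction. Pairing $\mathcal{X} \intprod (\alpha \wedge \beta)$ with an arbitrary multivector $\mathcal{Z}$ of complementary degree gives $(\alpha \wedge \beta)(\mathcal{X} \wedge \mathcal{Z})$ by definition of the left interior product, and since $\alpha$ is a one-form this equals $\bigl( (\mathcal{X} \wedge \mathcal{Z}) \prodint \alpha \bigr) \intprod \beta$. Expanding $(\mathcal{X} \wedge \mathcal{Z}) \prodint \alpha$ via the first identity and evaluating the two resulting pieces against $\beta$ (each has exactly the degree needed to be paired to a scalar) produces $\bigl( (\mathcal{X} \prodint \alpha) \intprod \beta \bigr)(\mathcal{Z})$ and $(-1)^k \bigl( \alpha \wedge (\mathcal{X} \intprod \beta) \bigr)(\mathcal{Z})$; since $\mathcal{Z}$ is arbitrary, the identity follows.

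The main obstacle will be the sign and degree bookkeeping: keeping every contraction within its admissible range of degrees (so that $\mathcal{X} \intprod \alpha$, $\mathcal{X} \prodint \beta$ and the like are defined, with out-of-range contractions read as zero), and checking that the individual sign flips accumulated in the inductive step really do combine to the stated $(-1)^k$. Beyond this there is no deeper input; the result is classical and is recorded in Ref.~\onlinecite{Tu1974}, so the argument above is in the end just a careful verification.
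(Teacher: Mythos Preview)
Your proof is correct. The paper itself does not prove this lemma: it is quoted without proof from Ref.~\onlinecite{Tu1974} (and Ref.~\onlinecite{Ma1997}), so there is no in-paper argument to compare against. Your reduction to decomposable multivectors followed by induction on the degree~$k$ for the first identity, and your duality argument deducing the second identity from the first by pairing against an arbitrary test multivector~$\mathcal{Z}$, are both sound; the degree bookkeeping works out exactly as you indicate, and the sign $(-1)^k$ indeed arises from commuting the one-form $\alpha$ past the degree-$k$ factor one step at a time. The only place to be slightly careful is in the second derivation, where you silently use that a full pairing between a multivector and a form of matching degree is symmetric in the sense $(\mathcal{Z} \prodint \alpha) \intprod \gamma = \gamma(\mathcal{Z} \prodint \alpha)$; this is immediate from the definitions but worth stating once. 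Otherwise the argument is complete and is the standard verification one would expect for a result of this type.
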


When the multi-vector field is decomposable, the right interior product with a one-form can easily be computed by means of the following formula, the proof of which follows from the previous theorem.

\begin{lemma} \label{lemma:decomp}
	Let $\mathcal{X}$ be a $k$-multi-vector field which is {\bfi decomposable} in the sense that $\mathcal{X} = \bigwedge_{\mu = 1}^k \mathcal{X}_\mu$, where $\mathcal{X}_\mu$ ($\mu = 1, \ldots, k$) are vector fields and consider a one-form $\alpha$.  Then
	\[
		\mathcal{X} \prodint \alpha = 
			\sum_{\mu = 1}^k (-1)^{\mu + 1} \left<\mathcal{X}_\mu, \alpha \right> 
				\hat{\mathcal{X}}_\mu, 
	\]
	where $\hat{\mathcal{X}}_\mu$ is the $(k-1)$-vector field obtained by deleting $\mathcal{X}_\mu$, i.e. 
	\[
		\hat{\mathcal{X}}_\mu = \bigwedge_{\stackrel{\lambda = 1}{\lambda \ne \mu}}^{n+1} \mathcal{X}_\lambda.
	\]
\end{lemma}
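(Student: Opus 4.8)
The plan is to establish the identity by induction on the degree $k$ of the decomposable multivector field $\mathcal{X} = \bigwedge_{\mu=1}^{k}\mathcal{X}_\mu$, using the first formula of Lemma~\ref{lemma:wedge} to strip off one factor at each stage.

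First I would settle the base case $k=1$: here $\hat{\mathcal{X}}_1$ is the empty wedge product (the constant function $1$), and the claimed formula reduces to $\mathcal{X}_1 \prodint \alpha = \left<\mathcal{X}_1, \alpha\right>$, which is nothing but the definition of the right interior product of a vector field with a one-form. For the inductive step, assume the formula for decomposable multivector fields of degree $k-1$ and write $\mathcal{X} = \mathcal{X}_1 \wedge \mathcal{Y}$ with $\mathcal{Y} := \bigwedge_{\mu=2}^{k}\mathcal{X}_\mu$. Applying the first identity of Lemma~\ref{lemma:wedge} with the degree-one field $\mathcal{X}_1$ in the role of $\mathcal{X}$ and $\mathcal{Y}$ in the role of $\mathcal{X}'$ (so the sign prefactor is $-1$), and using $\mathcal{X}_1 \prodint \alpha = \left<\mathcal{X}_1,\alpha\right>$ together with $\mathcal{Y} = \hat{\mathcal{X}}_1$, I obtain
\[
\mathcal{X} \prodint \alpha = \left<\mathcal{X}_1, \alpha\right>\hat{\mathcal{X}}_1 - \mathcal{X}_1 \wedge (\mathcal{Y}\prodint\alpha).
\]
I would then insert the induction hypothesis for $\mathcal{Y}\prodint\alpha$ and observe that wedging each summand on the left by $\mathcal{X}_1$ converts the deletion-hat taken inside $\mathcal{Y}$ into the deletion-hat taken inside $\mathcal{X}$, so the $\mu$-th term becomes a multiple of $\hat{\mathcal{X}}_\mu$ for $\mu = 2, \ldots, k$; after collecting signs (the explicit minus sign turns the $(-1)^{\mu}$ produced by the reindexed hypothesis into $(-1)^{\mu+1}$) and noting that the first term above is exactly the $\mu = 1$ summand with sign $(-1)^{1+1}=1$, one recovers $\mathcal{X}\prodint\alpha = \sum_{\mu=1}^{k}(-1)^{\mu+1}\left<\mathcal{X}_\mu,\alpha\right>\hat{\mathcal{X}}_\mu$.

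I do not expect a genuine obstacle here: the whole argument is the sign- and index-bookkeeping in the inductive step, with no analytic subtlety. If one prefers to bypass the induction, an equivalent and equally short route is to pair both sides with an arbitrary $(k-1)$-form $\gamma$. By the defining property of the right interior product, the left-hand side pairs to $(\alpha\wedge\gamma)(\mathcal{X}_1\wedge\cdots\wedge\mathcal{X}_k)$, and the standard cofactor expansion of the $k$-form $\alpha\wedge\gamma$ acting on $k$ vectors, carried out along the one-form factor $\alpha$, gives $\sum_{\mu}(-1)^{\mu+1}\left<\mathcal{X}_\mu,\alpha\right>\gamma(\hat{\mathcal{X}}_\mu)$, which is precisely $\gamma$ evaluated on the right-hand side; since $\gamma$ is arbitrary, the two multivector fields agree.
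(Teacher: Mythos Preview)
Your inductive argument using Lemma~\ref{lemma:wedge} is correct and is precisely the approach the paper indicates: the paper does not spell out a proof but simply states that the formula ``follows from the previous theorem'' (i.e., Lemma~\ref{lemma:wedge}), and your induction is the natural way to carry this out. Your alternative duality argument via pairing with an arbitrary $(k-1)$-form $\gamma$ is also valid and equally standard.
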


\begin{corollary} \label{cor:prodprod}
	Let $\mathcal{X}$ be a decomposable $k$-multivector and $\alpha$ an $l$-form, where $l \ge k$.  Then $\mathcal{X} \prodint (\mathcal{X} \intprod \alpha ) = 0$.
\end{corollary}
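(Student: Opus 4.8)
The plan is to use the nondegeneracy of the pairing between multivectors and forms: an $r$-multivector field on $M$ is zero if and only if its left interior product with every $r$-form vanishes. Here $r = 2k - l$, so the statement is only non-vacuous in the range $k < l \le 2k$ (for $l > 2k$ the expression $\mathcal{X}\prodint(\mathcal{X}\intprod\alpha)$ is not even defined, and the degenerate case $l = k$ is excluded). I would therefore fix an arbitrary $(2k-l)$-form $\gamma$ on $M$ and aim to prove that $\big(\mathcal{X}\prodint(\mathcal{X}\intprod\alpha)\big)\intprod\gamma = 0$.

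By the defining property of the right interior product (applied with $\beta = \mathcal{X}\intprod\alpha$, a form of degree $l-k \le k$), this quantity equals $\big((\mathcal{X}\intprod\alpha)\wedge\gamma\big)(\mathcal{X})$. Writing the decomposable multivector as $\mathcal{X} = \mathcal{X}_1\wedge\cdots\wedge\mathcal{X}_k$ and expanding the value of the wedge product $(\mathcal{X}\intprod\alpha)\wedge\gamma$ on $\mathcal{X}$ by the standard shuffle formula, each term is, up to sign, a product of a factor $(\mathcal{X}\intprod\alpha)(\mathcal{X}_{i_1},\ldots,\mathcal{X}_{i_{l-k}})$ with a factor $\gamma(\mathcal{X}_{j_1},\ldots,\mathcal{X}_{j_{2k-l}})$, where $\{i_1,\ldots,i_{l-k}\}$ and $\{j_1,\ldots,j_{2k-l}\}$ together exhaust $\{1,\ldots,k\}$.

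The crucial step is then the observation that, by the definition of the left interior product, $(\mathcal{X}\intprod\alpha)(\mathcal{X}_{i_1},\ldots,\mathcal{X}_{i_{l-k}}) = \alpha\big(\mathcal{X}_1\wedge\cdots\wedge\mathcal{X}_k\wedge\mathcal{X}_{i_1}\wedge\cdots\wedge\mathcal{X}_{i_{l-k}}\big)$, and since $l-k \ge 1$ this $l$-multivector contains a repeated factor and hence vanishes. Thus every term of the shuffle expansion is zero, so $\big((\mathcal{X}\intprod\alpha)\wedge\gamma\big)(\mathcal{X}) = 0$; as $\gamma$ was arbitrary, $\mathcal{X}\prodint(\mathcal{X}\intprod\alpha) = 0$.

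I do not anticipate any genuine obstacle here; the only thing requiring care is the degree bookkeeping. In the case actually needed for Theorem~\ref{thm:energy}, namely $l = k+1$, the form $\mathcal{X}\intprod\alpha$ has degree one and one can avoid the shuffle expansion altogether and simply invoke Lemma~\ref{lemma:decomp}: there $\mathcal{X}\prodint(\mathcal{X}\intprod\alpha) = \sum_{\mu=1}^{k}(-1)^{\mu+1}\langle\mathcal{X}_\mu,\mathcal{X}\intprod\alpha\rangle\,\hat{\mathcal{X}}_\mu$, and each coefficient $\langle\mathcal{X}_\mu,\mathcal{X}\intprod\alpha\rangle = \alpha(\mathcal{X}_1\wedge\cdots\wedge\mathcal{X}_k\wedge\mathcal{X}_\mu)$ vanishes because $\mathcal{X}_\mu$ is repeated.
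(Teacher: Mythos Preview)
Your proposal is correct. The paper's proof is precisely the special case you sketch in your final paragraph: it sets $\beta = \mathcal{X}\intprod\alpha$, invokes Lemma~\ref{lemma:decomp} verbatim to write $\mathcal{X}\prodint\beta = \sum_{\mu}(-1)^{\mu+1}\langle\mathcal{X}_\mu,\beta\rangle\hat{\mathcal{X}}_\mu$, and then observes that each coefficient $\langle\mathcal{X}_\mu,\beta\rangle = \mathbf{i}_{\mathcal{X}_\mu}\mathbf{i}_{\mathcal{X}}\alpha$ vanishes by repetition. Since Lemma~\ref{lemma:decomp} is stated only for one-forms, the paper's argument as written literally covers only the case $l = k+1$ (which is all that Theorem~\ref{thm:energy} requires).

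Your main argument, via the defining property of $\prodint$ and a shuffle expansion of $(\beta\wedge\gamma)(\mathcal{X})$, is a genuine extension: it handles the full range $k < l \le 2k$ in which the expression makes sense, with the same underlying mechanism (a repeated $\mathcal{X}_\mu$ in the argument of $\alpha$). Your observation about $l = k$ is also well taken: in that borderline case $\mathcal{X}\intprod\alpha$ is a scalar and $\mathcal{X}\prodint(\mathcal{X}\intprod\alpha) = \alpha(\mathcal{X})\,\mathcal{X}$, which need not vanish, so the corollary as stated really wants $l > k$. This does not affect anything downstream, since the only application is with $l = k+1$.
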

\begin{proof}
Let $\beta = \mathcal{X} \intprod \alpha$.  With the notations of the previous lemma, we have that 
\[
\mathcal{X} \prodint \beta = 
			\sum_{\mu = 1}^k (-1)^{\mu + 1} \left<\mathcal{X}_\mu, \beta \right> 
				\hat{\mathcal{X}}_\mu, 
	\]
but $\left<\mathcal{X}_\mu, \beta \right> = 0$ for all $\mu = 1, \ldots, k$, since
$\left<\mathcal{X}_\mu, \beta \right> = 
	\mathbf{i}_{\mathcal{X}_\mu} \mathbf{i}_{\bigwedge_{\lambda = 1}^{n+1} \mathcal{X}_\lambda} \alpha = 0$.
\end{proof}

In this paper, we also let $\hat{\mathcal{X}}_{\mu\nu}$ (where $\mu \ne \nu$) be the $(k-2)$-multivector obtained by deleting both $\mathcal{X}_\mu$ and $\mathcal{X}_\nu$: 
\[
	\hat{\mathcal{X}}_{\mu\nu} = \bigwedge_{\stackrel{\lambda = 1}{\lambda \ne \mu, \nu}}^{n+1} \mathcal{X}_\lambda.
\]
If $\mu < \nu$, we then have that 
\[
	\mathcal{X}_\lambda \wedge \hat{\mathcal{X}}_{\mu\nu} 
		= (-1)^{\mu+1} \delta_{\mu\lambda} \hat{\mathcal{X}}_\nu 
		-  (-1)^{\nu+1} \delta_{\nu\lambda} \hat{\mathcal{X}}_\mu.
\]
If $\mu > \nu$, note that $\hat{\mathcal{X}}_{\mu\nu} = \hat{\mathcal{X}}_{\nu\mu}$.  The above formula can then be applied.

\subsection{Proof of \eqref{contract}}
\label{sec:comp}

The computation of equation \eqref{contract} is somewhat involved and depends on a number of coordinate identities, listed here.  First, recall that $\mathcal{X}$ is an $(n + 1)$-multivector field with local expression \eqref{multvectf}, and that the multi-symplectic form $\Omega$ is locally given by equation \eqref{msform}.

Recall the volume form $\eta$ on $X$ is locally expressed by $d^{n + 1} x$ and denote
\[
	d^n x_\mu := \frac{\partial}{\partial x^\mu} \intprod d^{n+1}x, \quad
	d^{n-1}x_{\mu\nu} := \frac{\partial}{\partial x^\nu} \intprod d^n x_\mu, \quad \ldots
\]	
The following contractions will be useful:
\begin{equation} \label{contr1}
	\hat{\mathcal{X}}_\mu \intprod d^n x_\nu  = (-1)^{\nu + n} \delta_{\mu\nu}
\end{equation}
and, for $\mu < \nu$,
\begin{equation} \label{contr2}
	\hat{\mathcal{X}}_{\mu\nu} \intprod d^n x_\lambda = 
		(-1)^{n + \mu + \nu} ( \delta_{\nu\lambda} dx^\mu + \delta_{\mu\lambda} dx^\nu).
\end{equation}

Now return to the calculation of equation \eqref{contract} and the contraction of $\mathcal{X}$ with $\Omega_{M}$ is now given by 
\[
	\mathcal{X} \intprod \Omega_{M} = 
		\mathcal{X} \intprod (dy^A \wedge dp_A^\mu \wedge d^n x_\mu) -
		\mathcal{X} \intprod (dp \wedge d^{n+1} x), 
\]
and the two terms on the right-hand side will be calculated separately.

We begin with the first term. Using lemma~\ref{lemma:wedge}, we have 
\[
\mathcal{X} \intprod (dy^A \wedge dp_A^\mu \wedge d^n x_\mu)
= (\mathcal{X} \prodint dy^A) \intprod (dp_A^\mu \wedge d^n x_\mu) + 
(-1)^{n+1} dy^A \wedge ( \mathcal{X} \intprod (dp_A^\mu \wedge d^n x_\mu)).
\]
Both terms can be calculated using lemma~\ref{lemma:decomp} and using \eqref{contr1} and \eqref{contr2}.  After some rearrangements, the result is that 
\[
	\mathcal{X} \intprod (dy^A \wedge dp_A^\mu \wedge d^n x_\mu) = 
	(-1)^{n+2} \left[ (C^A_\mu C_{A\lambda}^\lambda - C^A_\lambda C^\lambda_{A\mu}) dx^\mu + C^A_\mu dp_A^\mu \right].
\]

For the second term, we use the same techniques to conclude that 
\[
	\mathcal{X} \intprod (dp \wedge d^{n+1} x) = (-1)^{n+1} \left[ dp - C_\mu dx^\mu \right].
\]
Putting the results for both terms together, we obtain \eqref{contract}.


\providecommand{\arxiv}[1]{\texttt{arXiv:#1}}\providecommand{\href}[1]{\texttt%
{#1}}\def\polhk#1{\setbox0=\hbox{#1}{\ooalign{\hidewidth\lower1.5ex\hbox{`}\hi%
dewidth\crcr\unhbox0}}} \def\cprime{$'$}

\end{document}